\title{Frequency-Competitive Query Strategies to Maintain Low Congestion Potential Among Moving Entities\thanks{This work was funded in part by 
Discovery Grants
from the Natural Sciences and Engineering Research Council of Canada.}
}
\titlerunning{Frequency-Competitive Query Strategies for Low Congestion Potential}
\author{William Evans}{Dept. of Computer Science, University of British Columbia, Vancouver, Canada}
{will@cs.ubc.ca} {}{}
\author{David Kirkpatrick}{Dept. of Computer Science, University of British Columbia, Vancouver, Canada}
{kirk@cs.ubc.ca} {}{}
\authorrunning{W. Evans and D. Kirkpatrick}
\keywords{data in motion, uncertain inputs,
collision avoidance, online algorithms, competitive analysis}
\newenvironment{scheme}[1][htb]{\begin{algorithm}[#1]%
  }{\end{algorithm}}
\theoremstyle{claimstyle}
\newcommand {\ply} {\ensuremath{\omega}\xspace}
\newcommand {\mathset} [1] {\ensuremath {\mathbb {#1}}}
\newcommand {\R} {\mathset {R}}
\newcommand{\AAA}{\mathbb{S}}
\newcommand{\PE}{\mathit{PE}}
\newcommand\dif{\mathop{}\!\mathrm{d}}
\newcommand{\NN}[3][]{%
\ifthenelse{\equal{#1}{}}{N_{#3}({#2})}{N_{#3}({#2},{#1})}%
}
\newcommand{\NM}[3][]{%
\ifthenelse{\equal{#1}{}}{M_{#3}({#2})}{N_{#3}({#2},{#1})}%
}
\newcommand{\pNN}[3][]{%
\ifthenelse{\equal{#1}{}}{\widetilde{N}_{#3}({#2})}{N_{#3}({#2},{#1})}%
}
\newcommand{\B}[3][]{%
\ifthenelse{\equal{#1}{}}{B_{#3}({#2})}{B_{#3}({#2},{#1})}%
}
\newcommand{\rd}[3][]{%
\ifthenelse{\equal{#1}{}}{r_{#3}({#2})}{r_{#3}({#2},{#1})}%
}
\newcommand{\prd}[3][]{%
\ifthenelse{\equal{#1}{}}{\widetilde{r}_{#3}({#2})}{\widetilde{r}_{#3}({#2},{#1})}%
}
\newcommand{\sd}[3][]{%
\ifthenelse{\equal{#1}{}}{\sigma_{#3}({#2})}{\sigma_{#3}({#2},{#1})}%
%\ifthenelse{\equal{#1}{}}{\textcolor{green}{\sigma}_{#3}({#2})}{\textcolor{green}{\sigma}_{#3}({#2},{#1})}%
}
\newcommand{\psd}[3][]{%
\ifthenelse{\equal{#1}{}}{\widetilde{\sigma}_{#3}({#2})}{\widetilde{\sigma}_{#3}({#2},{#1})}%
%\ifthenelse{\equal{#1}{}}{\widetilde{\textcolor{green}{\sigma}}_{#3}({#2})}{\widetilde{\textcolor{green}{\sigma}}_{#3}({#2},{#1})}%
}
\newcommand {\xball}[2][]{%
\ensuremath{%
\ifthenelse{\equal{#1}{}}{\text{ball}({#2})\xspace}{{#1}\text{-ball}({#2})\xspace}%
}}
\newcommand{\Bsec}{B}
\newcommand{\mkmcal}[1]{\ensuremath{\mathcal{#1}}\xspace}
\newcommand{\E}{\mkmcal{E}}
\newcommand{\elide}[1]{}
\date{\vspace{-7ex}}
\begin{document}

\maketitle

\begin {abstract}

Consider a collection of entities moving continuously with bounded speed, but otherwise unpredictably, in some low-dimensional space. Two such entities encroach upon one another at a fixed time if their separation is less than some specified threshold. Encroachment, of concern in many settings such as collision avoidance, may be unavoidable. However, the associated difficulties are compounded if there is uncertainty about the precise location of entities, giving rise to potential encroachment and, more generally, potential congestion within the full collection.

We consider a model in which entities can be queried for their current location (at some cost) and
the uncertainty regions associated with an entity grows in proportion to the time since that entity was last queried. The goal is to maintain low potential congestion, measured in terms of the (dynamic) intersection graph of uncertainty regions,  using the lowest possible query cost.
Previous work [SoCG'13, EuroCG'14, SICOMP'16, SODA'19], in the same uncertainty model, addressed the problem of minimizing the congestion 
potential \emph{of point entities} using location queries of some bounded frequency.
It was shown that it is possible to design a query scheme that is $O(1)$-competitive, in terms of worst-case 
congestion potential, with other, even clairvoyant
query schemes (that know the trajectories of all entities), subject to the same bound on query frequency.

In this paper we address a more general problem with the dual optimization objective:
minimizing the query frequency, measured in terms of total number of queries or the minimum spacing between queries (granularity), over any fixed time interval, 
while guaranteeing a fixed bound on congestion potential \emph{of entities with positive extent}.
This complementary objective necessitates quite different algorithms and analyses. Nevertheless, our results parallel those of the earlier papers, specifically tight competitive bounds on required query frequency, with a few surprising differences.

\end {abstract}

%\newpage

%\thispagestyle{empty}
%\tableofcontents

% \thispagestyle{empty}
% \newpage
% \setcounter{page}{1}

%\category{F.2.2}{Analysis of Algorithms and
%  Problem Complexity}{Non-numerical Algorithms and
%  Problems}[Geometrical problems and computations]
%\category{F.1.2}{Computation by Abstract Devices}{Modes of
%  Computation}[Online computation]

% General Terms (need to be chosen in the submission system, but
% remember this section does not have to appear on the first page)
%\terms{Theory}

%\keywords{input imprecision, competitive analysis, kinetic data}

\section {Introduction}

This paper addresses a fundamental issue in algorithm design, of both theoretical and practical interest: how to cope with unavoidable imprecision in data.
We focus on a class of problems associated with location uncertainty arising from the motion of independent entities when
location queries
to reduce uncertainty are
expensive.
For concreteness, imagine a collection of 
robots following unpredictable trajectories with bounded speed. If individual robots are not monitored continuously there is unavoidable uncertainty, growing with the length of unmonitored activity, concerning their precise location. This portends some risk of collision with neighbouring robots, necessitating some perhaps costly collision avoidance protocol. Nevertheless, robots that are known to be well-separated at some point in time will remain free of collision for the near future. How then should a limited query budget be allocated over time so as to minimize the risk of collisions or, more realistically, help focus collision avoidance measures on robot pairs that are in serious risk of collision?

We adopt a general framework for addressing such problems, essentially the same as the
one studied by Evans et al.~\cite{EKLS16} (which unifies and improves~\cite{EKLS-SoCG13,EKLS-EuroCG14}) and by Busto et al.~\cite{BEK2019}. In this model, entities may be queried at any time in order to reveal their true location, but between queries location uncertainty,
captured by regions surrounding the last known location,
grows. The goal is to understand with what frequency such queries need to be performed, and which entities should be queried, in order to maintain a particular measure of the congestion potential of the entities, formulated in terms of the overlap of their uncertainty regions.
We present query schemes to maintain several measures of congestion potential that,
over every modest-sized time interval, are competitive
in terms of the frequency of their queries, with any scheme that maintains the same measure over that interval alone.

\subsection{The Query Model}
To facilitate comparisons with earlier results, we adopt much of the notation used by Evans et al.~\cite{EKLS16} and Busto et al.~\cite{BEK2019}.
Let $\E$ be a set $\{ e_1, e_2, \dots, e_n \}$ of (mobile) entities. 
Each entity $e_i$ is modelled as a $d$-dimensional closed ball with fixed extent and bounded speed,
whose position (centre location) at any time is specified by the (unknown) function $\zeta_i$
from $[0, \infty)$ (time) to $\mathbb{R}^d$.
(We take the entity radius to be our \emph{unit of distance}, and take the time for an entity moving at maximum speed to move a unit distance to be our 
\emph{unit of time}.)

\begin{wrapfigure}{r}{5cm}
\includegraphics[scale=0.55]{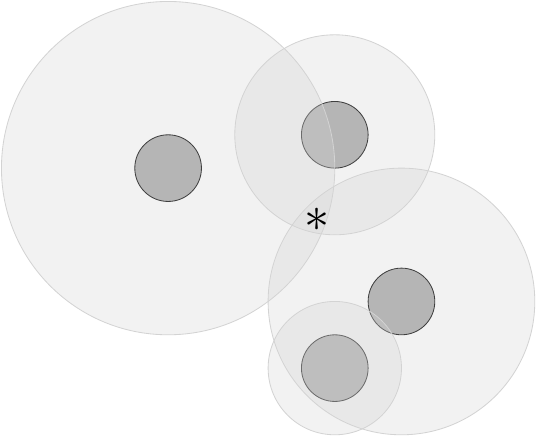}
\caption{Uncertainty regions of four unit-radius entities (dark grey), shown at their most recently known locations, four, three, two and one time unit in the past, respectively. The associated uncertainty ply is three (witnessed by point $*$).}
\label{fig:notations}
\end{wrapfigure}

The $n$-tuple $(\zeta_1(t), \zeta_2(t), \dots, \zeta_n(t))$ is called the \emph{$\E$-configuration} at time $t$.
Entities $e_i$ and $e_j$ are said to \emph{encroach} upon one another at time $t$ if the distance between their centres $\|\zeta_i(t) - \zeta_j(t)\|$ is less than some fixed \emph{encroachment threshold} $\Xi$. 
For simplicity we assume to start that the  distance between entity centres is always at least $2$ 
(i.e. $\|\zeta_i(t) - \zeta_j(t)\| - 2$, the
\emph{separation} between entities $e_i$ and $e_j$ at time $t$, is always at least zero---so entities never properly intersect), and that the encroachment threshold is exactly $2$ (i.e. we are only concerned with avoiding entity contact).
The concluding section considers a relaxation (and decoupling) of these assumptions in which 
$\|\zeta_i(t) - \zeta_j(t)\|$, is always at least some positive constant $\rho_0$ (possibly less than $2$), and 
the encroachment threshold $\Xi$ is some constant
at least $\rho_0$.

We wish to maintain knowledge of the positions of the entities over
time by 
making location queries to individual entities, each of which
returns the exact position of the entity at the time of the query.
A \emph{(query) scheme} $\AAA$ is just an assignment of 
location queries to time instances. 
We measure the \emph{query complexity} of a scheme over a specified time interval $T$ in terms of the total number of queries over $T$ as well as the minimum query \emph{granularity} (the time between consecutive queries) over $T$.

At any time $t\ge 0$, let $p^{\AAA}_i(t)$ denote the time, prior to $t$, that entity $e_i$ was last queried; we define $p^{\AAA}_i(0) = -\infty$. 
The \emph{uncertainty region} of $e_i$ at time $t$, denoted $u^{\AAA}_i(t)$, is defined as the ball with centre $\zeta_i(p^{\AAA}_i(t))$ and radius 
$1 + t - p^{\AAA}_i(t)$ ( cf. Figure~\ref{fig:notations});
note that $u^{\AAA}_i(0)$ is unbounded.
We omit $\AAA$ when it is understood and the dependence on $t$ when $t$ is fixed.

The set $U(t)=\{ u_1(t), \ldots, u_n(t) \}$
is called the \emph{(uncertainty) configuration} at time $t$.
Entity $e_i$ is said to \emph{potentially encroach} upon entity $e_j$
in configuration $U(t)$ if $u_i(t) \cap u_j(t) \neq \emptyset$ (that is, there are potential locations for 
$e_i$ and $e_j$ at time $t$ such that 
$e_i \cap e_j \neq \emptyset$).

In this way any configuration $U$ gives rise to an associated (symmetric) \emph{potential encroachment graph} $\PE^U$ on the set $\E$.  
Note that, by our assumptions above, the potential encroachment graph associated with the initial uncertainty configuration $U(0)$ is complete.

We define the following notions of \emph{congestion potential} 
(called \emph{interference potential} in \cite{BEK2019})
in terms of configuration $U$ and the graph $\PE^U$:

\begin{itemize}
\item the \textbf{(uncertainty) max-degree} (hereafter \textbf{degree}) of the  configuration $U$ is given by $\delta^{U}= \max_i\{\delta^{U}_i\}$
where $\delta^{U}_i$ is defined as the degree of entity $e_i$ in  $\PE^U$ (the maximum number of entities $e_j$, 
\emph{including $e_i$},
that potentially encroach upon $e_i$ in configuration $U$.) 

\item the \textbf{(uncertainty) ply} $\ply^U$ of configuration $U$ is the 
maximum number of uncertainty regions in $U$ that intersect at a single point. This is the largest number of entities in configuration $U$ whose mutual potential encroachment is witnessed by a single point.

\item the \textbf{(uncertainty) thickness} $\chi^U$ of configuration $U$ is the chromatic number of $\PE^U$. This is the size of the smallest decomposition of $\E$ into independent sets (sets with no potential for encroachment) in configuration $U$.
\end{itemize}

Note that $\ply^U \le \chi^U \le \delta^{U}$, so upper bounds on $\delta^{U}$, and lower bounds on $\ply^U$, apply to all three measures.

The assumption that entities never properly intersect is helpful since it means that if uncertainty regions are kept sufficiently small, uncertainty ply can be kept to at most two.
Similarly, for $x$ larger than some dimension-dependent sphere-packing constant, it is always possible to maintain uncertainty degree at most $x$, using sufficiently high query frequency.

\subsection{Related Work}

One of the most widely-studied approaches to computing functions of moving entities uses the \emph{kinetic data structure} model
which assumes precise information about the future trajectories of the
moving entities and relies on elementary geometric relations among their
locations along those trajectories to certify that a
combinatorial structure of interest, such as their convex hull, remains essentially the same.
The algorithm can anticipate when a relation will fail, or is informed
if a trajectory changes, and the goal is
to update the structure efficiently in response to these events~\cite{g-kdssar-98,BGH99,GuibasRoeloffzen17}.
Another less common model assumes that the precise location of
\emph{every} entity is given to the algorithm periodically.  The
goal again is to update the structure efficiently when this
occurs~\cite{deBerg2012quadtrees,deBerg2012hulls,deBerg2013centers}.

More similar to ours is 
the framework introduced by Kahan~\cite{Kahan91,KahanThesis91}
for studying data in motion problems
that require repeated computation of a function (geometric structure)
of data that 
is moving continuously in space where data acquisition
via queries is costly.
There, location queries
occur simultaneously in batches,
triggered by requests (Kahan refers to these as ``queries'') to compute some function at some time rather than by a
requirement to maintain a structure or property at all times.
Performance is compared to a ``lucky'' algorithm that queries the
minimum amount to calculate the function.
Kahan's model and use of competitive evaluation is common to much of
the work on query algorithms for uncertain inputs (see Erlebach and
Hoffmann's survey~\cite{ErlebachSurvey15}).

As mentioned, our model is 
essentially the same as the
one studied by Evans et al.~\cite{EKLS16} 
and by Busto et al.~\cite{BEK2019}, 
both of which focus on point entities.
Paper~\cite{EKLS16} contains strategies whose goal is to guarantee competitively low congestion potential, compared to any other (even clairvoyant) scheme, at one specified target time.
It provides precise descriptions of the impact on this guarantee for several measures of initial location knowledge and available lead time before the target.
The other paper~\cite{BEK2019} contains a scheme for guaranteeing competitively low congestion potential at \emph{all} times.
For this more challenging task the scheme maintains congestion potential over time that is within a constant factor of that maintained by any other scheme over modest-sized time intervals.
All of these results dealt with the \emph{optimization of congestion potential measures subject to fixed query frequency}.

In this paper, we consider 
the dual problem,
for entities of bounded extent:
\emph{optimizing query frequency required to guarantee fixed bounds on congestion}.
This dual optimization is fundamentally different: being able to optimize congestion using fixed query frequency provides no insight into how to optimize query frequency to maintain a fixed bound on congestion. In particular, even for stationary entities, a small change in the congestion bound can lead to an arbitrarily large change in the required query frequency.
Our optimization takes two forms: minimizing total queries (over specified intervals) and then maximizing the minimum query granularity.

\subsection{Our Results}

Our primary goal is to formulate efficient query schemes that, for all possible collections of moving entities, maintain fixed bounds on congestion potential measures at all times. Naturally for many such collections the required query frequency changes over time as entities cluster and spread, so efficient query schemes need to adapt to changes in the configuration of entities. While such changes are continuous and bounded in rate, they are only discernible through  queries to individual entities, so entity configurations are \emph{never} known precisely; future configurations are of course entirely hidden. In this latter respect our schemes and the competitive analysis of their efficiency, using as a benchmark a \emph{clairvoyant scheme} that bases its queries on full knowledge of all entity trajectories (and hence all future configurations), resembles familiar scenarios that arise in the design and analysis of on-line algorithms.

We begin by describing a query scheme to achieve uncertainty degree at most $x$
at one fixed target time in the future 
(reminiscent of the objective in \cite{EKLS16}). 
This serves not only as an interesting contrast (of independent interest) with continuous time optimization, but also plays an important role in the efficient initialization of our continuous schemes.
The detailed description and analysis of our fixed target time scheme, presented in an appendix, show that uncertainty degree at most $x$ can be achieved 
using query granularity that is at most a factor $\Theta(x)$ smaller than that used by
any, even clairvoyant, query scheme to achieve the same goal.
An example shows that this competitive factor is asymptotically optimal. Nevertheless, if we relax our objective, allowing instead 
uncertainty degree at most
$x\!+\!\Delta$,
where $1 \le \Delta \le x$,
the competitive factor $\Theta(x)$ drops to
$\Theta(\frac{x\!+\!\Delta}{1\!+\!\Delta})$.
Again, this competitive factor is shown to be asymptotically optimal
for non-clairvoyant schemes.
This analysis of an algorithm that solves a slightly relaxed optimization, relative to a clairvoyant algorithm that solves the un-relaxed optimization, foreshadows similar analyses of our schemes for continuous time query optimization.

We then turn our attention to 
our primary goal
(the \emph{continuous} case).
At first, we imagine that entities remain stationary, even though they
have the potential to move, since this avoids complications arising
from entities changing location and nearby neighbours.
We show a (tight) factor gap of $\Theta(\ln n)$ between the frequency
required to achieve uncertainty degree (or ply) $x$ at
a fixed-time versus at all times for $n$ stationary entities.
We then define a stationary frequency demand that serves to lower bound the number of queries (and hence upper bound the query granularity) 
required by any, even clairvoyant, scheme 
to maintain uncertainty ply $x$
over any modest-length time interval $T$. 
This is complemented by a scheme that uses $\Theta(x)$ times that number of (well-spaced) queries to achieve uncertainty degree $x$
over \emph{all} such time intervals.
Again the competitive factor drops if 
we permit our scheme to maintain
the relaxed 
degree bound $x\!+\!\Delta$.
Both competitive factors are shown to be best-possible
for non-clairvoyant schemes.

When the entities are mobile, the 
stationary
frequency demand changes over
time as entities cluster and separate
and the analysis becomes significantly more involved.
However, by integrating 
the stationary measure over any modest-length time interval $T$
we can again lower bound the 
number of queries (and hence upper bound the query granularity) 
required by any, even clairvoyant, scheme 
to maintain uncertainty ply $x$ over $T$. 
Again, a query scheme is described that meets these intrinsic bounds to within a factor $\Theta(x)$, over \emph{all} such time intervals $T$. 
As before, 
the competitive factor $\Theta(x)$ drops when we only need to maintain degree $x\!+\!\Delta$, and in all cases it is shown to be best possible.

Not surprisingly, it is possible to further strengthen our competitive bounds by making assumptions about the uniformity of entity distributions. In the concluding discussion,
we consider some ways in which uniformity impacts our results.
We also describe 
several other modifications to our model that make our query optimization framework even more broadly applicable.

\begin{wrapfigure}{r}{5cm}
\includegraphics{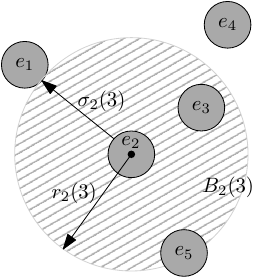}
\caption{A configuration of five unit radius entities. $B_2(3)$, the 3-ball of entity $e_2$, is shown shaded.}
\label{fig:configuration}
\end{wrapfigure}

Returning to the problem concerning collision avoidance mentioned at the start, it follows from our results that, by maintaining uncertainty degree at most $x$, we maintain for each entity $e_i$ a certificate 
identifying
the, at most $x-1$, other entities that could potentially 
encroach upon
$e_i$ (those warranting more careful local monitoring).
An additional application, considered in~\cite{BEK2019}, concerns entities that are mobile transmission
sources, with associated broadcast ranges, where the goal is to minimize the number of broadcast channels so as to eliminate potential transmission interference. 
In this case, maintaining the uncertainty thickness to be at most $x$, using minimum query frequency, serves to maintain a fixed bound on the number of broadcast channels, an objective that seems to be at least as well motivated as optimizing the number of channels for a fixed query frequency (the objective in~\cite{BEK2019}).

\section{Geometric Preliminaries}

In any $\E$-configuration $Z= (z_1, z_2, \ldots, z_n)$ and for any positive integer $x$, we call the separation between $e_i$ and its $x$th closest neighbour (not including $e_i$) its \emph{$x$-separation}, and denote it by $\sigma^Z_i(x)$. 
We call the closed ball with radius (called the \emph{$x$-radius} of $e_i$) 
$r^Z_i(x)=\sigma^Z_i(x) + 1$
and centre $z_i$, the \emph{$x$-ball} of $e_i$, and denote it by $B^Z_i(x)$
(cf. Figure~\ref{fig:configuration}.)
We will omit $Z$ when the configuration is understood.
Note that, for all entities $e_i$ and $e_j$, 
\begin{equation}\label{eqn:observationA}
    \sigma_j^Z(x) \le \| z_j - z_i \| + \sigma_i^Z(x), 
\end{equation}
since the ball with radius 
$\| z_j - z_i \| + \sigma_i^Z(x)$
centred at $z_j$ contains the ball with radius 
$\sigma_i^Z(x)$
centred at $z_i$ (by the triangle inequality).

We first observe that, at any time, individual entities have bounded overlap with the
$\hat{x}$-balls of other entities.
See Appendix~\ref{app:BallCoverNEW} for the proof of the following:
\begin{restatable}{lem}{LemBallCoverNEW}\label{lem:ballcoverNEW}
In any $\E$-configuration $Z$, any entity $e_*$ intersects the $\hat{x}$-balls of at most $5^d \hat{x}$ entities.
\end{restatable}

We have assumed that entities do not properly intersect. 
Define $c_{d,x}$ to be the smallest constant such that a unit-radius $d$-dimensional ball $B$ can have $x$ disjoint unit-radius $d$-dimensional balls (not including itself) with separation from $B$ at most $c_{d,x}$.
Thus, 
$\sigma^Z_i(x) \ge c_{d,x}$
and hence 
\begin{equation}\label{eqn:lambda}
    \frac{1- \lambda_{d,x}}{\lambda_{d,x}} \sigma_i^Z(x) \ge 1 
    \mbox{  and   } 
    r_i^Z(x) \le \sigma_i^Z(x)/\lambda_{d,x}
\end{equation}
where $\lambda_{d,x} = \frac{c_{d,x}}{1 + c_{d,x}}$.
Observe that, for any $\xi \ge 0$, 
$c_{d,x} \ge \xi$ 
provided $x \ge (3+ \xi)^d$, 
since unit-radius balls with separation at most $\xi$ from $B$ must all fit within a ball of radius $3 + \xi$ concentric with $B$.
Thus $1/2 \le \lambda_{d,x} < 1$ if $x \ge 4^d$.

Let $X_d$ be the largest value of $x$ for which $c_{d,x} = 0$ (e.g., $X_2=6$).
Clearly, if $x \le X_d$ there are entity configurations $Z$ with $\sigma^Z_i(x)=0$.
Thus, for such $x$, maintaining uncertainty degree at most $x$ 
might be
impossible in general, for any query scheme.
On the other hand, if $x > X_d$, then $0 < \lambda_{d,x} < 1$. 
Thus:

\mbox{}\\
\indent
{\bf Remark.} 
Hereafter \emph{we will assume that $x$, our bound on congestion potential,
is greater than $X_d$}.
The constants $X_d$ and 
$\lambda_{d,x}$
will play a role in both the formulation and analysis of our query schemes in (arbitrary, but fixed) dimension $d$. If the reader prefers to focus on dimension $2$, then it will be safe to assume hereafter that $x\ge 16$ (and so  $1/2 \le \lambda_{d,x}< 1$).

\section{Query Optimization at a Fixed Target Time}\label{sec:fixedtarget}

Suppose our goal, for a given entity set $\E$, is to optimize queries to guarantee low congestion potential at a fixed target time,
say time $\tau$ in the future,
starting from a state of complete uncertainty of entity locations.
One reason might be to prepare for a computation at that time whose efficiency depends on this low congestion potential~\cite{LofflerSnoeyink10}.
Not surprisingly, this 
can also play a
role in initializing our query scheme that optimizes queries continuously from some point in time onward.

Since we assume unbounded uncertainty regions at the start,
any query scheme 
must query at least a fixed fraction of the entities over the interval $[0, \tau]$, provided $\E$ is sufficiently large
compared to the allowed potential congestion measure.
Hence the minimum query granularity over this interval must be $O(\tau/|\E|)$. 
Furthermore, if granularity is not an issue, $O(|\E|)$ queries suffice, provided they are made sufficiently close to the target time. 
Optimizing the minimum query granularity is less straightforward. 
Nevertheless, it is clear that any sensible query scheme, using minimum query granularity $\gamma$, that 
guarantees a given  measure of congestion potential at most $x$ at 
the target time $\tau$, determines a radius 
$1 + k_i \gamma$, 
for each entity $e_i \in \E$, where (i) $k_1, k_2, \ldots k_n$ is a permutation of $1, 2, \ldots n$, and 
(ii) the uncertainty configuration, in which entity $e_i$ has an uncertainty region $u_i$ with centre $\zeta_i(\tau)$ and radius $1 + k_i \gamma$, has the given congestion measure at most $x$.
For any measure, we associate with $\E$ an {\em intrinsic fixed-target granularity}, denoted $\gamma_{\E,x},$ defined to be the largest $\gamma$ for which these conditions are satisfiable.

It is not hard to see that, by projecting the current uncertainty regions to the target time (assuming no further queries), some entities can be declared ``safe'' (meaning their projected uncertainty regions cannot possibly contribute to ply greater than $x$ at the target time). 
This idea is exploited in a query scheme that queries entities in rounds of geometrically decreasing duration, following each of which a subset of ``safe'' entities are set aside with no further attention, until no ``unsafe'' entities remain.
The details and analysis of this
Fixed-Target-Time (FTT) query scheme are presented in Appendix~\ref{app:FTT}. 
Our analysis is summarized in:

\begin{restatable}{theorem}{FixedTargetTimeTheorem}
\label{thm:FTT}
For any $\Delta$,
$0 \le \Delta \le x$,
the FTT$[x\!+\!\Delta]$ query scheme guarantees uncertainty degree at most $x\!+\!\Delta$ at target time $\tau$ and uses minimum query granularity over the interval $[0, \tau]$ that is at most a factor
$\Theta(\frac{x\!+\!\Delta }{ 1\!+\!\Delta})$ smaller than
$\gamma_{\E,x}$.
\end{restatable}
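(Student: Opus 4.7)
Plan: The proof has two parts. First I would show correctness: the scheme guarantees uncertainty degree at most $(1+\beta)x$ at target time $\tau$. Then I would analyze competitiveness: the minimum query granularity used by FTT$[(1+\beta)x]$ is within the stated factor of $\gamma_{\E,x}$.

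For correctness, fix any entity $e_i$ set aside in round $r$, at time $t_r = \tau - \tau/2^r$. By definition, $e_i$ is $(1+\beta)x$-degree-safe at $t_r$, so its projected uncertainty region then intersects at most $(1+\beta)x - 1$ other projected regions. Since $e_i$ is never queried again, its uncertainty region at $\tau$ coincides with its projected region at $t_r$. For every other entity $e_j$, any query issued after $t_r$ can only shrink $e_j$'s uncertainty region, so $e_j$'s region at $\tau$ is contained in $e_j$'s projected region at $t_r$. Hence the entities whose uncertainty region at $\tau$ intersects $e_i$'s form a subset of the entities whose projected region at $t_r$ intersected $e_i$'s projected region, giving degree at most $(1+\beta)x$.

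For competitiveness, let $\gamma^* = \gamma_{\E,x}$ and let $N_r = n_{r-1} + m_{r-1}$ denote the number of survivors entering round $r$, so FTT's granularity in round $r$ is $\tau/(2^r N_r)$ and $\gamma_{\text{FTT}} = \min_r \tau/(2^r N_r)$. I plan to show, for every $r \ge 1$,
\[
N_r \;\le\; \frac{\tau}{2^r\,\gamma^*}\cdot O\!\left(\frac{(1+\beta)x}{1+\beta x}\right),
\]
from which $\gamma_{\text{FTT}} \ge \gamma^*\cdot \Omega\bigl((1+\beta x)/((1+\beta)x)\bigr)$ follows immediately. For round~$1$ the inequality is immediate since any scheme must query at least $n - (1+\beta)x$ entities to keep the target-time ply below $(1+\beta)x + 1$. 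For later rounds I contrast FTT's projected configuration just before round~$r$ (each survivor has projected radius $\Theta(\tau/2^{r-1})$, and every unsafe survivor's projected region intersects at least $(1+\beta)x$ others) with a witness $\gamma^*$-configuration for $\gamma_{\E,x}$ (balls at true target positions $\zeta_i(\tau)$ with radii $\rho + k_i\gamma^*$ and degree at most $x$). Near each unsafe survivor a volume pigeonhole on comparable-radius projected regions yields a point with projected ply $\Omega((1+\beta)x)$, whereas the witness ply is at most $x$ everywhere, so the witness must assign small rank $k_i \le K := O(\tau/(2^r\gamma^*))$ (hence small witness ball) to at least $\Omega(\beta x)$ of the contributing entities near each unsafe survivor. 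Charging each survivor against these small-rank assignments, and using Lemma~\ref{lem:ballcover} to bound how many survivor neighborhoods a single small-rank witness entity can serve, yields the required bound on $N_r$.

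The main obstacle is reconciling geometric frames: FTT's projected regions are centered at FTT's past query locations, whereas the witness balls sit at true target positions $\zeta_i(\tau)$. I plan to bridge this with the elementary sandwich observation that any uncertainty region of radius $\rho + s$ contains the ball of radius $\rho$ at $\zeta_i(\tau)$ and is contained in the ball of radius $\rho + 2s$ there; re-expressing FTT's projected regions in this common frame costs only constant factors (absorbed in the $O(\cdot)$ above) and lets a single invocation of Lemma~\ref{lem:ballcover} drive both the lower bound on FTT's projected ply and the upper bound on how many survivor neighborhoods one small-rank witness entity can cover.
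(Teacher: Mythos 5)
Your correctness argument is sound, and if anything more explicit than the paper's: once $e_i$ is set aside its uncertainty region at $\tau$ equals its projected region at that moment, and every other entity's region at $\tau$ is contained in its projected region at that moment, so the degree bound follows by monotonicity.

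The competitiveness half has a workable overall framing --- compare FTT's projected configuration at the start of round $r$ to the witness permutation realizing $\gamma_{\E,x}$, bound $N_r$ by the number of small-rank witness entities, and control sharing via Lemma~\ref{lem:ballcover} --- but as written it has three genuine gaps. First, the volume pigeonhole is too lossy. Having $(1+\beta)x$ comparable-radius projected regions meet $e_i$'s only yields a point of projected ply $\Omega((1+\beta)x)$ with an implicit constant no better than roughly $3^{-d} < 1$. Comparing that to witness ply at most $x$ gives nothing when $\beta$ is small, and gives literally nothing when $\beta = 0$, which is exactly the regime where the theorem's factor is $\Theta(x)$ and needs proving. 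The paper's argument (and the natural repair of yours) works with \emph{degree} rather than ply: $e_i$'s projected region meets at least $(1+\beta)x$ others, the witness degree of $e_i$ is at most $x-1$ others, so at least $\beta x + 1$ of those projected neighbours must be non-neighbours of $e_i$ in the witness, with no dimensional constant lost. Second, and relatedly, charging only $\Omega(\beta x)$ small-rank entities per unsafe survivor against a pool of $K\cdot O((1+\beta)x)$ slots yields $N_r \le K\cdot O\bigl(\frac{1+\beta}{\beta}\bigr)$, which matches the target factor only when $\beta = \Omega(1/x)$ and is vacuous at $\beta=0$. You need $\Omega(1+\beta x)$ per survivor; the ``$+1$'' comes from the case split in the paper (either $e_i$ itself is queried, or at least $\beta x+1$ of its neighbours are), and in your framing it corresponds to the observation that $e_i$'s own rank $k_i$ must also be below $K$ since each severed pair $(e_i,e_j)$ forces $k_i+k_j$ small. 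Third, your charging covers only the unsafe survivors $n_{r-1}$, but FTT's round-$r$ granularity is $\tau/(2^r(n_{r-1}+m_{r-1}))$: it also pays for the $m_{r-1}$ \emph{safe} survivors, which by definition have projected degree below $(1+\beta)x$ and hence are invisible to any pigeonhole on $e_i$'s own neighbourhood. The paper handles them in a separate paragraph, arguing that each safe survivor has an unsafe projected neighbour whose severing queries must land inside a bounded near-neighbourhood of the safe survivor, with sharing controlled by an area argument; something analogous is needed before your inequality bounds $N_r$ rather than just $n_{r-1}$.
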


A similar, though somewhat simpler, argument can be used to show that 
a competitive ratio of $\Theta(\frac{x\!+\!\Delta }{ 1\!+\!\Delta})$ on the 
granularity for a scheme that achieves \emph{ply} at most $x\!+\!\Delta$, relative to any, even clairvoyant, scheme that guarantees ply at most $x$, can also be achieved.
It is interesting to see that the 
$O(\frac{x\!+\!\Delta}{1\!+\!\Delta})$ competitive ratio on query granularity for achieving degree at most $x\!+\!\Delta$ (as well as that for achieving ply at most $x\!+\!\Delta$) at a fixed target time cannot be improved in general. 
In fact, we show in Appendix~\ref{ex:MobilePlyFreqX} an example 
demonstrating that, for $1 \le \Delta \le x$
degree at most $x\!+\!\Delta$ can be guaranteed at a fixed target time by a scheme that uses query granularity one, yet any
non-clairvoyant scheme
that guarantees ply at most $x$ at the target time must use at least 
$x+2\Delta +1$ queries over the last $2(1\!+\!\Delta)$ steps.

\section{Continuous Query Optimization: Stationary Entities}

For stationary entities, maintaining a fixed configuration $Z$ over time,
uncertainty degree at most $x$ can be maintained on a continuous basis using granularity at least $\Theta(\frac{1}{\ln n})$ times 
what we called the intrinsic fixed-target granularity ($\gamma_{\E,x}$), required to achieve uncertainty degree at most $x$ at any fixed time. %(even if in the latter all locations are known at the start).
This follows because the uncertainty region of each entity $e_i$ can be kept, at all times, within its corresponding fixed-target realization,
simply 
by querying it at least once every 
$k_i \gamma_{\E,x}$ time steps. By Corollary 3.2 of \cite{Holte89}, we know that such a schedule is feasible using query granularity $\gamma_{\E,x} / (2 \ln n)$.
A simple example (see Appendix~\ref{ex:stationaryPlyFreqX}) shows that this  $\Theta(\frac{1}{\ln n})$ gap
is unavoidable in general.

\subsection{Intrinsic Frequency Demand: Stationary Entities}

If the radius of the uncertainty region $u_i$ of $e_i \in \E$ exceeds 
$1 + \sigma_i(x)$
then $u_i$ intersects at least $x+1$ entities in $\E$. 
Thus $\sigma_i(x)$, the $x$-separation of $e_i$,  is an upper bound on the amount of uncertainty whose avoidance guarantees
that the uncertainty degree of entity $e_i$ remains at most $x$. 
It follows that, 
for stationary entities,  each entity $e_i$ must be queried with frequency at least $1 / \sigma_i(x)$.
Hence $\phi_{\E, x} = \sum_{e_i \in \E} \frac{1}{\sigma_i(x)}$,
the \emph{stationary frequency demand},
provides a lower bound on the 
total query frequency {(measured as queries per unit of time)}\footnote{This frequency demand appears in work on scheduling jobs according to a vector $v$ of periods, where job $j$ must be scheduled at least once in every time interval of length $v_j$~\cite{Holte89,Fishburn2002}.} 
required to avoid uncertainty degree greater than $x$, with entities in stationary configuration $Z$.
Note that, since we have assumed that $x > X_d$, we know that
$\sigma_i(x)$ is some strictly positive multiple (at least $\lambda_{d,x}$) of $r_i(x)$
(see Equation~(\ref{eqn:lambda})).
Hence, 
$\lambda_{d,x} \phi_{\E, x} \le \sum_{e_i \in \E} \frac{1}{r_i(x)}$.

The following lemma shows that an expression closely related to $\phi_{\E, x}$ 
plays an essential role as well in bounding
the total number of queries required to maintain uncertainty ply at most $x$ over a specified time interval. Note that it applies to strategies that maintain ply at most $x$ \emph{only} over $T$, \emph{even if they allow ply greater than $x$ elsewhere}.

\begin{restatable}{lem}{LemmaStationaryPlyLB}
\label{newnewlem:stationaryPlyLB}
Let 
$0 \le \Delta \le x$
and let $T$ be a time interval for which 
$|T| \phi_{\E, x\!+\!\Delta} \ge 2|\E|$.
Then any query scheme with maximum uncertainty ply at most $x$ over $T$ 
must make a total of at least 
$\frac{(\Delta+1) \lambda_{d,x\!+\!\Delta}}{ 4 \cdot 5^d (x\!+\!\Delta)} \cdot 
|T| \phi_{\E, x\!+\!\Delta}$
queries over $T$.
\end{restatable}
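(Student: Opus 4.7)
The plan is to derive, for each entity $e_i$, a local lower bound on the number of queries in $T$ directed at entities near $e_i$, and then to aggregate these per-entity bounds via Lemma~\ref{lem:ballcover}.

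For each index $i$, let $\tau_i = \sigma^Z_i((1+\beta)x)$ and enlarge $\Gamma^Z_i((1+\beta)x)$ by adjoining the $(1+\beta)x$-th closest other entity to obtain a set $\Gamma'_i$ of size $(1+\beta)x + 1$, every member of which has centre-to-centre distance at most $\tau_i + 2\rho$ from $z_i$. In the stationary case $u_i(t)$ always contains $z_i$, so of the $(1+\beta)x + 1$ members of $\Gamma'_i$ the entity $e_i$ itself automatically contributes to the ply at $z_i$; maintaining ply at most $x$ at $z_i$ throughout $T$ therefore forces, at every $t \in T$, at most $x-1$ of the other $(1+\beta)x$ members of $\Gamma'_i$ to have $z_i \in u_j(t)$. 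Equivalently, at least $\beta x + 1$ of those others must have been queried within the preceding $\|z_i - z_j\| - \rho$ time units, a ``freshness window'' bounded above (via (\ref{eqn:lambda})) by $L_i := \tau_i + \rho \le \tau_i/\lambda_{d,(1+\beta)x}$.

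For each $i$, I next trim $L_i$ off the start of $T$ so that any query responsible for a fresh neighbour of $e_i$ must itself lie in $T$; integrating the ``$\ge \beta x + 1$ fresh'' lower bound over the resulting sub-interval, and using that each query to $e_j$ contributes at most $L_i$ to the freshness integral, shows that the total number of $T$-queries to $\Gamma'_i \setminus \{e_i\}$ is at least $(\beta x + 1)(|T| - L_i)/L_i$; for those $i$ with $|T| \ge 2L_i$ this is at least $(\beta x + 1)\lambda_{d,(1+\beta)x}|T|/(2\tau_i)$. Summing over such $i$, and applying Lemma~\ref{lem:ballcover} with $\alpha = 1$ and $B$ taken to be the $\rho$-ball of each $e_j$ (whose radius is no larger than any $(1+\beta)x$-ball, so that the hypothesis of the lemma is satisfied) to charge every query to $e_j$ to at most $A_{d,1}(1+\beta)x$ indices $i$, yields a global lower bound of $\tfrac{(\beta x + 1)\lambda_{d,(1+\beta)x}}{2A_{d,1}(1+\beta)x} \cdot |T| \sum_{i:\, 2L_i \le |T|} 1/\tau_i$ on the total number of queries in $T$.

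Finally, the hypothesis $|T|\phi_{\E,(1+\beta)x} \ge 2|\E|$ is used to absorb the contribution of the excluded indices: an excluded $i$ has $\tau_i > |T|\lambda_{d,(1+\beta)x}/2$, hence $1/\tau_i < 2/(|T|\lambda_{d,(1+\beta)x})$, and since there are at most $|\E| \le |T|\phi_{\E,(1+\beta)x}/2$ such indices, their aggregate contribution to $\phi_{\E,(1+\beta)x}$ is only a constant fraction, leaving $\sum_{i:\,2L_i \le |T|} 1/\tau_i = \Omega(\phi_{\E,(1+\beta)x})$ and producing the lemma's constant of $4$ in the denominator once all constant losses have been absorbed. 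I expect the main obstacle to lie in this simultaneous tail-and-boundary handling: the per-$i$ argument wastes an interval of length $L_i$ at the start of $T$ and is useless for entities with $\tau_i$ comparable to or exceeding $|T|$, and it is precisely the hypothesised size of $|T|\phi_{\E,(1+\beta)x}$ that ensures the dominant ``interior'' portion of the sum survives both losses within a single constant factor.
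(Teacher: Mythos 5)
Your argument follows the same blueprint as the paper's proof: for each entity $e_i$ you extract a local lower bound on queries to its $(1+\beta)x$-neighbourhood, and you aggregate by charging each query at most $A_{d,1}(1+\beta)x$ times via Lemma~\ref{lem:ballcover} with $\alpha=1$. Your continuous ``freshness integral'' formulation of the local bound is interchangeable, up to constants, with the paper's discrete partition of $T$ into sub-intervals of length $\approx r_i((1+\beta)x)$, and both the resulting bound $(\beta x+1)(|T|-L_i)/L_i$ and the double-counting step are sound.

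The gap --- and you correctly anticipate its location --- is the final absorption step. You need $\sum_{i:\,2L_i\le|T|}1/\tau_i=\Omega(\phi_{\E,(1+\beta)x})$, but your own bookkeeping does not deliver it. An excluded index has $1/\tau_i < 2/(|T|\lambda_{d,(1+\beta)x})$, and there are at most $|\E|\le|T|\phi_{\E,(1+\beta)x}/2$ of them, so all you obtain is $\sum_{\text{excluded}}1/\tau_i < \phi_{\E,(1+\beta)x}/\lambda_{d,(1+\beta)x}$. Since $\lambda_{d,(1+\beta)x}<1$ this upper bound \emph{exceeds} $\phi_{\E,(1+\beta)x}$, so it cannot certify that the excluded indices account for only a fraction of the stationary frequency demand, and ``producing the lemma's constant of $4$'' does not follow from the chain of inequalities you describe. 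The paper performs the trimming at the level of $r_i$ (deleting $e_i$ with $r_i((1+\beta)x)>|T|$) and bounds each excluded $|T|/\sigma_i$ by $1$ before converting $\sum 1/r_i$ to $\phi$ at the very end, so the $\lambda$-loss is incurred only once; to make either version fully rigorous one must be explicit about when the $r_i$-versus-$\sigma_i$ conversion is applied (or invoke $\lambda_{d,(1+\beta)x}\ge 1/2$, which holds once $(1+\beta)x\ge 4^d$, or mildly strengthen the $2|\E|$ hypothesis). As written your closing sentence asserts the absorption works; the inequalities you set up do not actually close it.
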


\begin{proof}
Let $\E' = \{ e_i \in \E \mid r_i(x\!+\!\Delta) \le |T| \}$. 
Then $|T| \phi_{\E', x\!+\!\Delta} \ge |T| \phi_{\E, x\!+\!\Delta} - |\E|
\ge |T| \phi_{\E, x\!+\!\Delta}/2$. 

For each  $e_i \in \E'$ partition $T$ into 
$\big\lfloor \frac{|T|}{r_i(x\!+\!\Delta)}\big\rfloor 
\ge \frac{|T|}{2 r_i(x\!+\!\Delta)}$ sub-intervals each of length at least $r_i(x\!+\!\Delta)$. 
Let $E_i$ denote the set of (at least) $x\!+\!\Delta$ entities in $\E \setminus e_i$ that intersect the 
$(x\!+\!\Delta)$-ball of entity $e_i$.
We say that entity $e_i$ is \emph{satisfied} in a sub-interval if at least $\Delta + 1$ of the
entities in $E_i$ are queried in that sub-interval. 

If $e_i$ is not satisfied in a given sub-interval then at the end of the sub-interval at least $x$ of the $x\!+\!\Delta$ entities in $E_i$ must have uncertainty regions of radius at least $r_i(x\!+\!\Delta)$, and hence intersect $z_i$, forming ply at least $x+1$ at that point. 
Thus, to avoid ply greater than $x$ throughout $T$, every entity $e_i$ must be satisfied in each of its sub-intervals, and so have associated with it at least $(\Delta +1) \cdot  \frac{|T|}{2 r_i(x\!+\!\Delta)} $ queries. 

But, by Lemma~\ref{lem:ballcoverNEW} entity $e_j$ can intersect
the $(x\!+\!\Delta)$-ball of at most $5^d (x\!+\!\Delta)$ entities, and so a query to $e_j$ can serve to satisfy at most $5^d (x\!+\!\Delta)$ other entities. 
It follows that the total number of queries needed to avoid ply greater than $x$ throughout interval $T$ is at least
\begin{multline*} 
(\Delta +1) \cdot \sum_{e_i \in \E'}  \frac{|T|}{2 r_i(x\!+\!\Delta)}  \frac{1}{\cdot 5^d (x\!+\!\Delta)} 
\ge  \frac{(\Delta+1) }{ 2\cdot 5^d (x\!+\!\Delta)} |T| \sum_{e_i \in \E'}  \frac{1}{r_i(x\!+\!\Delta)} \\
\ge  \frac{(\Delta+1) }{2 \cdot 5^d (x\!+\!\Delta)} \cdot 
|T| \lambda_{d,x\!+\!\Delta} \phi_{\E', x\!+\!\Delta}
\ge   \frac{(\Delta+1) \lambda_{d,x\!+\!\Delta}}{ 4 \cdot 5^d (x\!+\!\Delta)} \cdot 
|T| \phi_{\E, x\!+\!\Delta}  
.
\end{multline*}
\end{proof}

We next
show that 
it is possible to maintain uncertainty degree (and hence uncertainty ply) at most $x$ using a query scheme with minimum query granularity that is $\Theta(1/ \phi_{\E, x})$.

\subsection{A Scheme to Maintain Low Degree: Stationary Entities}

The \emph{Frequency-Weighted Round-Robin} 
scheme for maintaining uncertainty degree at most $x$, denoted ${\rm FWRR}[x]$,  queries according to a schedule in which, for all $i$,
entity $e_i$ is queried once every 
$t_i = 2^{g + \lfloor \lg  [ \sigma_i(x) \lambda_{d,x} / (\lambda_{d,x} +2)] \rfloor}$
time steps of size (granularity) $1/2^g$, 
where 
$g = \lceil \lg( \frac{\lambda_{d,x} +2}{\lambda_{d,x}} \phi_{\E, x}) \rceil +1$.
The schedule repeats after $\max\{t_i\}$ steps.

\begin{lemma}
\label{lem:FWRRupper}
The query scheme  ${\rm FWRR}[x]$ maintains uncertainty configurations with uncertainty degree at most $x$ 
at all times,  and has an implementation using minimum query granularity at least
$\frac{\lambda_{d,x}}{4(\lambda_{d,x} +2)} \frac{1}{ \phi_{\E, x}}$.
\end{lemma}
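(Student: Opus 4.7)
The plan is to prove the two assertions separately, both leaning on the bound $h_i := t_i/2^g \le \sigma_i(x)\lambda_{d,x}/(\lambda_{d,x}+2)$ on the real-time gap between consecutive queries of entity $e_i$.

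For the granularity and schedulability, I would first verify the load condition $\sum_i 1/t_i \le 1$. Using $t_i \ge 2^{g-1}\sigma_i(x)\lambda_{d,x}/(\lambda_{d,x}+2)$, the sum is at most $\frac{2(\lambda_{d,x}+2)\phi_{\E,x}}{2^g\lambda_{d,x}}$, which is $\le 1$ precisely because the ceiling in the definition of $g$ forces $2^g \ge 2(\lambda_{d,x}+2)\phi_{\E,x}/\lambda_{d,x}$. Since all $t_i$ are powers of two, this load bound is classically sufficient for a conflict-free periodic round-robin schedule (standard binary-tree / pinwheel scheduling). The minimum step in such a schedule is $1/2^g$, and the same ceiling yields $2^g \le 4(\lambda_{d,x}+2)\phi_{\E,x}/\lambda_{d,x}$, giving the stated lower bound on granularity.

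For the degree bound, note that the uncertainty radius of $e_i$ at every time is at most $\rho + h_i$. If $u_i\cap u_j\neq\emptyset$ then $\|z_i-z_j\|\le 2\rho+h_i+h_j$, and I would case-analyse on whether $\sigma_j(x)\le\sigma_i(x)$ or not. In the first case, $h_i+h_j\le 2h_i = 2\sigma_i(x)\lambda_{d,x}/(\lambda_{d,x}+2)<\sigma_i(x)$ (since $\lambda_{d,x}<2$), which together with~(\ref{eqn:lambda}) forces $\|z_i-z_j\|<\sigma_i(x)+2\rho$ and hence $e_j\in\Gamma_i(x)$. The symmetric case swaps the roles of $i$ and $j$ to place $e_i\in\Gamma_j(x)$; such entities also satisfy $r_j(x)\ge r_i(x)$ and $B_j(x)\cap B_i(x)\neq\emptyset$, so Lemma~\ref{lem:ballcover} applied with $\alpha=1$ and $B=B_i(x)$ bounds their count. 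Combining the two cases controls the uncertainty degree of $e_i$.

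The main obstacle is the second step. An isolated entity with large $\sigma_j(x)$ carries a correspondingly large uncertainty region that can in principle reach $z_i$ from outside $B_i(x)$, so one cannot simply argue that $u_j\cap u_i\neq\emptyset$ implies $e_j\in\Gamma_i(x)$. The asymmetric treatment of the two cases, the specific choice of the factor $\lambda_{d,x}/(\lambda_{d,x}+2)$ in the period (tuned so that $h_i+h_j\le\sigma_i(x)$ becomes tight after substituting the bound on $\rho$ from~(\ref{eqn:lambda})), and the ball-packing bound of Lemma~\ref{lem:ballcover} together furnish the essential geometric ingredients.
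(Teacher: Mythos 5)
Your granularity/schedulability argument matches the paper's in substance: both use the load bound $\sum_i 1/t_i \le 1$ and invoke a power-of-two periodic scheduling result (the paper cites Anily et al.; your binary-tree/pinwheel reference is the same device), and your arithmetic for the $\frac{\lambda_{d,x}}{4(\lambda_{d,x}+2)\phi_{\E,x}}$ bound is correct.

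The degree bound, however, has a genuine gap. Your two-case split yields a bound of at most $x$ entities (case 1, via $\Gamma_i(x)$) \emph{plus} at most $A_{d,1}x$ entities (case 2, via Lemma~\ref{lem:ballcover}), for a total of $(1+A_{d,1})x$. That is $O(x)$, not the claimed $\le x$, and the constant $A_{d,1}$ is not small. The paper avoids the case split entirely: for any $e_j$ whose separation from $e_i$ is $s \ge \sigma_i(x)$, Equation~(\ref{eqn:observationA}) gives $\sigma_j(x) \le s + 2\rho + \sigma_i(x) \le 2s + 2\rho$ \emph{regardless} of whether $\sigma_j(x)$ exceeds $\sigma_i(x)$. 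Feeding that into Equation~(\ref{eqn:lambda}) yields $\sigma_i(x) + \sigma_j(x) \le 3s + 2\rho \le \frac{\lambda_{d,x}+2}{\lambda_{d,x}} s$, hence $h_i + h_j \le \frac{\lambda_{d,x}}{\lambda_{d,x}+2}(\sigma_i(x)+\sigma_j(x)) \le s$, so $u_i$ and $u_j$ never meet. In other words, \emph{every} encroacher of $e_i$ necessarily lies in $\Gamma_i(x)$, so the degree is at most $x$ with no second case needed. You correctly identified the ``isolated entity with large $\sigma_j(x)$'' as the obstacle, but the resolution is not ball packing — it is that such an entity's $\sigma_j(x)$ is itself controlled by the separation $s$ via~(\ref{eqn:observationA}), which forces the combined uncertainty radii below $s$ precisely because of the $\frac{\lambda_{d,x}}{\lambda_{d,x}+2}$ tuning of the query periods.
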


\begin{proof}
Since $t_i/2^g \le 
\frac{\lambda_{d,x}}{\lambda_{d,x} + 2} \sigma_i(x)$,
it follows that 
${\rm FWRR}[x]$ will query every entity $e_i$ with at most
$\frac{\lambda_{d,x}}{\lambda_{d,x} + 2} \sigma_i(x)$ time between queries. 
Equation~(\ref{eqn:observationA}) implies that any entity $e_j$ whose separation from $e_i$ is $s \ge \sigma_i(x)$ 
has the property that $\sigma_j(x) \le s + 2 + \sigma_i(x) \le 2s + 2$. 
Hence, using equation~(\ref{eqn:lambda}),
\[ 
\sigma_i(x) + \sigma_j(x) \le 3s + 2  \le (3 + 2\frac{1- \lambda_{d,x}}{\lambda_{d,x}}) s = \frac{\lambda_{d,x} + 2}{\lambda_{d,x}} s,
\]
and 
$\frac{\lambda_{d,x}}{\lambda_{d,x} + 2} \sigma_i(x) +
\frac{\lambda_{d,x}}{\lambda_{d,x} + 2} \sigma_j(x)  \le s$.
So the uncertainty regions of $e_i$ and $e_j$ never properly intersect,
and thus the uncertainty degree of $e_i$ remains at most $x$ over time.

Since $\sum_{e_i \in \E} \frac{1}{t_i} < \frac{\lambda_{d,x} + 2}{\lambda_{d,x} 2^{g-1}} \phi_{\E, x} \le 1$,
it follows from a result of Anily et al.~\cite{anily98}
(see Lemma~6.2) that a query schedule exists with at most one query for every slot of size $1/2^g$.
Hence, the FWRR query scheme can be implemented with query granularity at least 
$\frac{\lambda_{d,x}}{4(\lambda_{d,x} +2)} \frac{1}{ 
\phi_{\E, x}
}$.
\end{proof}

Combining Lemma~\ref{newnewlem:stationaryPlyLB} with Lemma~\ref{lem:FWRRupper} we get:

\begin{theorem}
\label{thm:FWRRplycompetitive}
For any $\Delta$,
$0 \le \Delta \le x$,
the ${\rm FWRR}[x\!+\!\Delta]$ query scheme  maintains uncertainty configurations with 
uncertainty 
degree 
at most $x\!+\!\Delta$
using minimum query granularity at least
$\frac{\lambda_{d,x}}{4(\lambda_{d,x} +2)} \frac{1}{ \phi_{\E, x\!+\!\Delta}}$.
Furthermore, it uses a total number of queries over any 
time interval $T$
satisfying  
$|T| \phi_{\E, x\!+\!\Delta} \ge 2|\E|$,
that is competitive
with any scheme $\mathbb{S}$ that maintains uncertainty configurations with uncertainty ply at most $x$ over $T$, even if $\mathbb{S}$ permits uncertainty ply larger than $x$ elsewhere. 
The competitive factor is
$O(\frac{x\!+\!\Delta}{1\!+\!\Delta})$.
\end{theorem}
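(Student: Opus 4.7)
The plan is to derive the two claims separately, relying only on the preceding lemmas.

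The first assertion---that ${\rm FWRR}[(1+\beta)x]$ maintains uncertainty degree at most $(1+\beta)x$ at all times and admits an implementation with minimum query granularity at least $\frac{\lambda_{d,(1+\beta)x}}{4(\lambda_{d,(1+\beta)x}+2)} \cdot \frac{1}{\phi_{\E,(1+\beta)x}}$---is exactly Lemma~\ref{lem:FWRRupper} applied with the parameter $x$ replaced by $(1+\beta)x$, so nothing new needs to be shown.

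For the competitive statement, I would first upper-bound the total number of queries that ${\rm FWRR}[(1+\beta)x]$ makes during a time interval $T$. By the definition of $t_i$ (and because a time step has size $1/2^g$), entity $e_i$ is queried once every at most $\sigma_i((1+\beta)x)\lambda_{d,(1+\beta)x}/(\lambda_{d,(1+\beta)x}+2)$ units of real time, contributing at most $|T|(\lambda_{d,(1+\beta)x}+2)/(\lambda_{d,(1+\beta)x}\,\sigma_i((1+\beta)x)) + 1$ queries over $T$. Summing over $e_i \in \E$ gives a total of $O(|T|\,\phi_{\E,(1+\beta)x}) + O(|\E|)$, and the hypothesis $|T|\,\phi_{\E,(1+\beta)x}\ge 2|\E|$ lets me absorb the second term into the first to obtain $O(|T|\,\phi_{\E,(1+\beta)x})$.

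Dividing this by the lower bound $\Omega\!\left(\frac{(\beta x+1)\lambda_{d,(1+\beta)x}}{(1+\beta)\,A_{d,1}\,x}\cdot|T|\,\phi_{\E,(1+\beta)x}\right)$ from Lemma~\ref{newnewlem:stationaryPlyLB}---which applies to any scheme that maintains ply at most $x$ over $T$, even one that permits larger ply outside $T$---yields a competitive ratio of $O\!\left(\frac{(1+\beta)x}{\beta x+1}\right)$, treating $\lambda_{d,(1+\beta)x}$ and $A_{d,1}$ as absolute constants in fixed dimension $d$. When $\beta = 0$ this is $O(x)$; when $\beta > 0$ we have $\beta x+1\ge \beta x$, so the ratio simplifies to $O((1+\beta)/\beta)$. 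The only nontrivial book-keeping is in absorbing the additive $|\E|$ rounding term into the main query count, and the hypothesis on $|T|$ is precisely what makes this possible.
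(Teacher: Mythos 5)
Your proof is correct and takes exactly the route the paper intends: the paper gives no explicit proof, writing only that the theorem follows by combining Lemma~\ref{newnewlem:stationaryPlyLB} with Lemma~\ref{lem:FWRRupper}, and you have supplied precisely that bookkeeping (apply Lemma~\ref{lem:FWRRupper} at parameter $(1+\beta)x$, upper-bound FWRR's query total by $O(|T|\phi_{\E,(1+\beta)x})$ using the hypothesis on $|T|$, and divide by the Lemma~\ref{newnewlem:stationaryPlyLB} lower bound). One small slip: to upper-bound the query count you need a \emph{lower} bound on the inter-query period $t_i/2^g$, namely $t_i/2^g \ge \tfrac12\sigma_i((1+\beta)x)\lambda_{d,(1+\beta)x}/(\lambda_{d,(1+\beta)x}+2)$ from the floor in the exponent of $t_i$, not the upper bound you cite; this costs only a factor of two and does not affect the $O(\cdot)$ conclusions.
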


Appendix~\ref{ex:stationaryPlyFreqX} describes a collection of stationary entities 
showing that, for any 
$\Delta$,
$0 \le \Delta \le x$,
ply at most $x$ can be maintained with query granularity one, yet any scheme that guarantees uncertainty degree at most $x\!+\!\Delta$ must use at least 
$x\!+\!\Delta + 1$ queries over any time interval of length 
$4(1\!+\!\Delta)$.
Thus the worst-case competitiveness of the ${\rm FWRR}[x\!+\!\Delta]$ query scheme is asymptotically optimal not only for maintaining low uncertainty \emph{degree}, it is also optimal among all strategies that maintain low uncertainty \emph{ply} by maintaining low uncertainty degree.

\section{Continuous Query Optimization: General Mobile Entities}

While the case of stationary entities exhibits some of the difficulties in
maintaining uncertainty regions with low congestion, mobile entities add an
additional level of complexity.

Since an $\E$-configuration may now change over time, we add a parameter $t$ to our stationary definitions, and refer to $x_\E(t)$, $B_i(x,t)$, $\sigma_i(x,t)$, and $r_i(x,t)$
in place of their stationary counterparts at time $t$, where it is understood that the configuration in question is just the $\E$-configuration at time $t$.

As we have seen, when entities are stationary, the expression 
$|T| \phi_{\E, x} = \sum_{e_i \in \E} \frac{|T|}{\sigma_i(x)}$,  
the {\em stationary frequency demand} over time interval $T$,
plays a central role in characterizing the unavoidable number of queries needed to avoid uncertainty degree (or uncertainty ply) greater than $x$. 
For mobile entities, we will refer to the more general expression
$\phi_{\E,x}(T)=\sum_{e_i \in \E} \int_{T} \frac{\dif{t}}{\sigma_i(x, t)}$
as 
the {\em intrinsic frequency demand over time interval $T$},
with no risk of confusion.

\subsection{Intrinsic Frequency Demand: General Mobile Entities}

It follows from earlier work on the optimization of query degree using fixed query frequency~\cite{BEK2019} that a high stationary frequency demand {\em at one instant in time} does not necessarily imply that uncertainty degree at most $x$ is unsustainable.
Nevertheless, as the following lemma demonstrates, high intrinsic frequency demand, 
over a sufficiently large time interval $T$, does imply a lower bound on the number of queries over $T$.

\begin{restatable}{lem}{MobilePlyLBLemma}
\label{lem:mobilePlyLB}
Let $0 \le \Delta \le x$
and let $T$ be a time interval for which
$\phi_{\E,x\!+\!\Delta}(T) \ge 505 |\E|$.
Define $T^+$ to be the interval $T$ extended by $|T|/3^{55}$.
Then any query scheme with maximum uncertainty ply at most $x$ over $T^+$ 
must make a total of at least 
$\frac{(\Delta +1) \lambda_{d,x\!+\!\Delta}}{75 [(99 (354)^d + (489)^d] (x\!+\!\Delta)}
\phi_{\E,x\!+\!\Delta}(T)$
queries over $T^+$.
\end{restatable}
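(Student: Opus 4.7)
The plan is to extend the proof of Lemma~\ref{newnewlem:stationaryPlyLB} by replacing its uniform partition of $T$ into sub-intervals of length $r_i((1+\beta)x)$ with an \emph{adaptive} partition tracking the time-varying radius $r_i((1+\beta)x, t)$. As a preliminary trimming step, restrict to $\E' \subseteq \E$ consisting of entities whose $(1+\beta)x$-radius stays below a tiny fraction of $|T|$ (roughly $|T|/2^{55}$) throughout $T^+$. The excluded entities contribute only $O(|\E|)$ to $\phi$, absorbed by the hypothesis $\phi_{\E,(1+\beta)x}(T) \ge 505|\E|$, so $\phi_{\E',(1+\beta)x}(T) = \Theta(\phi_{\E,(1+\beta)x}(T))$; the same threshold guarantees that the challenge intervals defined below fit inside $T^+$.

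For each $e_i \in \E'$ define milestones $t_0 = \min T$ and $t_{k+1} = t_k + c \cdot r_i((1+\beta)x, t_k)$ for a small constant $c < 1/4$, so that unit-speed motion cannot change $r_i$ by more than a constant factor across a single stride. To milestone $k$ associate the \emph{challenge interval} $I_k = [t_k, t_k + r_i((1+\beta)x, t_k)]$; since $|I_k|$ is a factor $1/c$ larger than the stride, consecutive $I_k$'s of a single entity overlap heavily. The key claim, replacing the one in the stationary proof, is: within each $I_k$, either $e_i$ is queried or at least $\beta x + 1$ entities of $E_i(t_k)$ are queried. Otherwise, at time $t^* = t_k + r_i((1+\beta)x, t_k)$ the point $y = \zeta_i(t_k)$ lies simultaneously in $u_i(t^*)$ (because $e_i$ was not queried on $[t_k, t^*]$) and in at least $x$ of the $u_j(t^*)$ for $e_j \in E_i(t_k)$, forcing ply greater than $x$ at time $t^* \in T^+$. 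The containment $y \in u_j(t^*)$ follows from $\|y - \zeta_j(p_j)\| \le (2\rho + \sigma_i((1+\beta)x, t_k)) + (t_k - p_j) = \rho + (t^* - p_j)$, where the final equality uses $t^* - t_k = \rho + \sigma_i((1+\beta)x, t_k)$.

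For the accounting I would treat the two cases uniformly with ``demand units'' of $\beta x + 1$ per milestone: either the $\beta x + 1$ required queries to $E_i(t_k)$ are indeed made, or a single self-query to $e_i$ is credited with comparable weight because such a query can serve only $e_i$'s own milestones and so is not diluted by sharing. Each query to $e_j$ at time $t$ is charged to at most $\lceil 1/c \rceil$ challenge intervals of any fixed $e_i$ and, by Lemma~\ref{lem:ballcover} applied to the configuration at time $t$ with inflation $\alpha = 3$ (absorbing the motion of $e_i$, $e_j$, and the drift of $r_i$ between $t_k$ and $t$), it belongs to $E_i(t_k)$ for at most $A_{d,3}(1+\beta)x$ distinct $e_i$'s. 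A Riemann-sum estimate, valid because $r_i$ remains within a constant factor across a milestone, gives that $e_i$ has at least $\Omega(\int_T dt / r_i((1+\beta)x, t))$ milestones; summing over $e_i \in \E'$ and using Equation~(\ref{eqn:lambda}) to convert $r_i$ to $\sigma_i$ yields the stated lower bound. The main obstacle is reconciling the two scales: the stride must be short for Riemann stability and bounded ball-cover inflation, while each challenge interval must be as long as $r_i$ itself to force the witness inequality. Resolving this by allowing challenges to overlap heavily (at sharing cost $O(1/c)$), combined with the three-fold ball-cover inflation to absorb all the motion between $t_k$ and the query time, is what dictates the appearance of $A_{d,3}$ and the absolute constant $75$ in the bound.
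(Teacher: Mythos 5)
Your witness computation for the challenge interval is correct and matches the paper's, and the milestone/Riemann-sum skeleton is sound as far as it goes. But there are two genuine gaps, and they turn out to be exactly the issues the paper's proof is engineered to handle via its classification of sub-intervals into \emph{good}, \emph{bad}, and \emph{terminal}, together with its matching argument.

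First, the trimming step is unsound. You discard every entity whose $(1+\beta)x$-radius exceeds $|T|/2^{55}$ \emph{anywhere} in $T^+$ and claim the discarded entities contribute only $O(|\E|)$ to $\phi_{\E,(1+\beta)x}(T)$. This would be true for stationary entities (each such entity contributes at most $|T|/r_i \le 2^{55}$), but not for mobile ones: an entity can start with radius $|T|$, have its neighbors approach at unit speed, and have $r_i(t)=\rho$ for most of $T$, contributing $\Theta(|T|/\rho)$ to the integral, which can dwarf $|\E|$. Indeed \emph{every} entity could be of this kind, making $\E'$ empty while $\phi_{\E,(1+\beta)x}(T)$ is arbitrarily large. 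The paper never trims by a radius threshold; it trims only by number of sub-intervals (a bounded number of sub-intervals contributes a bounded amount to the integral), which is a quantity that respects radius drift.

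Second, and more centrally, the accounting at the query time $t$ does not go through. You charge a query to $e_j$ at time $t$ to all $e_i$ with $t\in I_k^{(i)}$ and $e_j\in E_i(t_k^{(i)})$, and invoke Lemma~\ref{lem:ballcover} with $\alpha=3$ at the configuration at time $t$. But the challenge interval $I_k^{(i)}$ has length $r_i((1+\beta)x,t_k^{(i)})$, over which $r_i$ can shrink all the way to $\rho$ (the change rate is $2$). When that happens, the $3$-inflated $(1+\beta)x$-ball of $e_i$ at time $t$ has radius as small as $3\rho$, while $e_j$, who was in $E_i(t_k^{(i)})$, can by time $t$ be at distance up to roughly $\rho+3r_i(t_k^{(i)})$ from $\zeta_i(t)$. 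There is no bound on the ratio, so $e_j$ need not intersect $3B_i((1+\beta)x,t)$, and Lemma~\ref{lem:ballcover} simply does not bound the number of $e_i$'s being charged. The paper's fix is precisely the good/bad distinction: it only runs the witness-plus-ball-cover argument on \emph{good} sub-intervals, where $r_i$ is pinned within a factor of $3$ of its initial value so that the inflated-ball argument applies, and it separately shows (the matching argument, which is the real technical content of this lemma) that at least a constant fraction of all non-terminal sub-intervals are good; the extension $T^+$ is needed so that a long run of unmatched bad sub-intervals at the end of $T$ must be followed by good ones, not merely to ``make challenge intervals fit.'' Your proposal omits this mechanism entirely, so the proof does not close.
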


\begin{proof}
(See Appendix~\ref{sec:MobilePlyLBLemma} for full proof.)
%\david{need to modify this sketch...}
At a high level the proof parallels that of Lemma~\ref{newnewlem:stationaryPlyLB}.
However, in the mobile case the
$
(x\!+\!\Delta)$-radius, and indeed the $
(x\!+\!\Delta)$-neighbourhood
of each entity, changes over time.
A reasonable hope is that the \emph{integral} of the entity's inverse $
(x\!+\!\Delta)$-radius over
$T$,
summed over all entities, provides a similar basis for a lower bound.
Certainly, each entity requires 
all but $x-1$ of
its $(x\!+\!\Delta)$-neighbours
to be queried to avoid ply $x$ within a sub-interval
of $T$ of length proportional to its $x$-radius.
The difficulty is that one mobile entity can be the
$(x\!+\!\Delta)$-neighbour of
many entities over time so that one query to that entity can help satisfy the
demands of many sub-intervals.
(In the stationary case, we saw that one query can help satisfy the demands of at most 
$\Theta(x\!+\!\Delta)$
entities since this is the maximum number of stationary
$(x\!+\!\Delta)$-neighbourhoods a stationary entity can be in.)
However, if we restrict our attention to sub-intervals of an entity
during which the entity's
$(x\!+\!\Delta)$-radius remains approximately the same
size, we can apply something similar to the stationary case argument.
The challenge is to show that such sub-intervals, that are not simultaneously partially satisfied together with a large number of other sub-intervals, cover a substantial
fraction of $T$ for many entities.
\end{proof}

It turns out that the lower bound implicit in Lemma~\ref{lem:mobilePlyLB} holds for the interval $T$ itself, or a very small shift of $T$: 

\begin{restatable}{cor}{MobilePlyLBCor}
\label{cor:mobilePlyLB}
Let $0 \le \Delta \le x$
and let $T$ be a time interval for which
$\phi_{\E,x\!+\!\Delta}(T) \ge 1010 |\E|$.
Define 
$\overrightarrow{T}$ to be the interval $T$ shifted by $|T|/3^{55}$
and $T^+ = T \cup  \overrightarrow{T}$.
Then any query scheme with maximum uncertainty ply at most $x$ over $T^+$ 
must make a total of at least 
$\frac{(\Delta +1) \lambda_{d,x\!+\!\Delta}}{75 [(99 (354)^d + (489)^d] (x\!+\!\Delta)}
\phi_{\E,x\!+\!\Delta}(T)$
queries over either $T$ or $\overrightarrow{T}$.
\end{restatable}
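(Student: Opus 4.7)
The plan is a short pigeonhole argument that reduces the corollary directly to Lemma~\ref{lem:mobilePlyLB}. My first observation is that the interval $T^+$ in the corollary is identical to the one used in the lemma: writing $T=[a,a+L]$ with $L=|T|$, we have $\overrightarrow{T}=[a+L/2^{55},a+L+L/2^{55}]$, so $T^+ = T\cup\overrightarrow{T} = [a,a+L+L/2^{55}]$ is exactly $T$ extended on the right by $|T|/2^{55}$. Hence the lemma's $T^+$ and the corollary's $T^+$ refer to the same set of time instants, and a scheme that maintains ply at most $x$ over the corollary's $T^+$ satisfies the lemma's hypothesis on its $T^+$.

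Next I would verify the remaining hypothesis: the corollary's assumption $\phi_{\E,(1+\beta)x}(T)\ge 1010|\E|$ trivially implies the lemma's $\phi_{\E,(1+\beta)x}(T)\ge 505|\E|$. Applying Lemma~\ref{lem:mobilePlyLB}, any query scheme with uncertainty ply at most $x$ throughout $T^+$ must issue at least $q_{T^+}\ge C\,\phi_{\E,(1+\beta)x}(T)$ queries over $T^+$, where $C = \frac{(\beta x+1)\lambda_{d,(1+\beta)x}}{75(1+\beta)A_{d,3}x}$ (absorbing the minor $\lambda$-index discrepancy in the lemma's statement into the $\Theta$). Let $q_T$ and $q_{\overrightarrow{T}}$ denote the number of queries issued over $T$ and $\overrightarrow{T}$, respectively.

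Since every time point in $T^+=T\cup\overrightarrow{T}$ lies in $T$, in $\overrightarrow{T}$, or in both, every query counted by $q_{T^+}$ contributes to at least one of $q_T$, $q_{\overrightarrow{T}}$ (and to both if it falls in the intersection $T\cap\overrightarrow{T}$). Consequently $q_T+q_{\overrightarrow{T}}\ge q_{T^+}$, and by pigeonhole $\max\{q_T,q_{\overrightarrow{T}}\}\ge q_{T^+}/2 = \Theta(\phi_{\E,(1+\beta)x}(T))$, which is precisely the claimed localized bound. The resulting factor of two is harmless: it is absorbed into the hidden constant in $\Theta$, and the doubled hypothesis threshold ($1010|\E|$ versus $505|\E|$) appears to be tuned to anticipate exactly this slack, keeping the stated constant identical in form to the lemma's.

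I do not foresee a genuine obstacle here; the argument is essentially mechanical once one identifies $T^+=T\cup\overrightarrow{T}$. The only conceptual point worth flagging is that the corollary's apparent \emph{localization} of the lower bound (from the length-$(1+2^{-55})|T|$ interval $T^+$ down to one of two length-$|T|$ subintervals) is precisely what pigeonhole over the two-element covering $\{T,\overrightarrow{T}\}$ of $T^+$ delivers, at a constant-factor cost that the constants in the hypothesis were chosen to absorb.
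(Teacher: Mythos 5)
Your proof is correct, and it takes a genuinely different (and arguably more economical) route than the paper's. The paper does not pigeonhole on the query count; instead it pigeonholes on the \emph{demand}: it defines $\overleftarrow{T}$ as $T$ shifted by $-|T|/2^{55}$, sets $T_0=\overleftarrow{T}\cap T$ (the prefix of $T$) and $T_1=\overrightarrow{T}\cap T$ (the suffix of $T$), observes $T_0\cup T_1 = T$ so that at least one of $\phi_{\E,(1+\beta)x}(T_0)$ or $\phi_{\E,(1+\beta)x}(T_1)$ is $\ge \tfrac{1}{2}\phi_{\E,(1+\beta)x}(T)\ge 505|\E|$, and then applies Lemma~\ref{lem:mobilePlyLB} to that sub-interval, using the inclusions $T_0^+\subset T$ and $T_1^+\subset\overrightarrow{T}$ to localize the resulting lower bound. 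Your version applies the lemma once to $T$ itself (observing, exactly as you do, that the lemma's $T^+$ and the corollary's $T^+$ coincide) and then pigeonholes the \emph{query count} between the two length-$|T|$ intervals covering $T^+$. Both approaches lose the same factor of two (the paper's version loses it because it bounds the queries via $\phi(T_0)\ge \phi(T)/2$), so the stated constant $\frac{(\beta x+1)\lambda_{d,(1+\beta)x}}{75(1+\beta)A_{d,3}x}$ is, as you observe, really only attainable up to constants under either argument; the explicit $\Theta(\cdot)$ in the corollary signals that. One small correction to your narrative: under your pigeonhole-on-counts argument the hypothesis $\phi_{\E,(1+\beta)x}(T)\ge 1010|\E|$ is actually stronger than needed (you only invoke $\ge 505|\E|$); the factor $1010$ is tuned precisely for the paper's pigeonhole-on-demand step, where one must guarantee $\phi(T_0)\ge 505|\E|$ or $\phi(T_1)\ge 505|\E|$. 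So the $1010$ is not absorbing \emph{your} slack, it is absorbing \emph{theirs}.
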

\begin{proof}
See Appendix~\ref{sec:ProofMobilePlyLBCor}.
\end{proof}

\subsection{Perception Versus Reality}
For any query scheme,
the true location of a moving entity $e_i$ at time $t$, $\zeta_i(t)$,  may 
differ from its \emph{perceived location}, $\zeta_i(p_i(t))$, its location at the time of its most recent query.
Let $\pNN[t]{x}{i}$ be $e_i$ plus the set of $x$ entities whose perceived locations at time $t$ are closest to the perceived location of $e_i$ at time $t$.
The \emph{perceived $x$-separation} of $e_i$ at time $t$, denoted $\psd[t]{x}{i}$, is the separation between $e_i$ and its perceived $x$th-nearest-neighbour at time $t$, i.e., $\psd[t]{x}{i} = \max_{e_j \in \pNN[t]{x}{i}} \| \zeta_i(p_i(t)) - \zeta_j(p_j(t)) \| - 2$.
The \emph{perceived $x$-radius} of $e_i$ at time $t$, denoted $\prd[t]{x}{i}$, is just $1 + \psd[t]{x}{i}$.

Since a scheme only knows the perceived locations of the entities, it is important that each entity $e_i$ be probed sufficiently often that its 
perceived $x$-separation $\psd[t]{x}{i}$ closely approximates its true $x$-separation $\sd[t]{x}{i}$
at all times $t$.
The following technical lemma
asserts that 
once a close relationship between perception and reality has been established, it
can be sustained by ensuring that the time between queries to an entity is bounded by
some small fraction of 
its perceived $x$-separation. See Appendix~\ref{sec:NEWPerceptionLemma} for the proof.

\begin{restatable}{lem}{NEWNEWLemmaPerceptionPercQuery}
\label{lem:NEWPerception}
Suppose that for some $t_0$ and
for all 
entities $e_i$,
\begin{enumerate}
\item[(i)] 
$\sd[p_i(t_0)]{x}{i}/2 \leq \psd[p_i(t_0)]{x}{i} \leq  3\sd[p_i(t_0)]{x}{i}/2$, 
{\rm [perception is close to reality for $e_i$  at time $p_i(t_0)$]}
and
\item[(ii)] for any $t \ge t_0$, 
$t - p_i(t) \leq \lambda_{d,x} \psd[p_i(t)]{x}{i} / 12 $
{\rm [all queries are done promptly based on perception]}.
\end{enumerate}
Then 
for all entities $e_i$, 
$\sd[t]{x}{i} /2 \leq \psd[t]{x}{i} \leq  3 \sd[t]{x}{i}/2$,
for all $t \geq p_i(t_0)$.
\end{restatable}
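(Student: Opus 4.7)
My plan is to prove the invariant by a continuity/contradiction argument on time. Suppose, for contradiction, that $\tau$ is the smallest time at which the invariant $\sd[\tau]{x}{i}/2 \le \psd[\tau]{x}{i} \le 3\sd[\tau]{x}{i}/2$ fails for some entity $e_i$. Since $\sd[\cdot]{x}{i}$ is continuous in time (being the $x$-th smallest of continuously varying pairwise separations) while $\psd[\cdot]{x}{i}$ is piecewise constant with jumps only at query events, such a first failure must coincide with a query event; otherwise continuity would extend the strict validity of the invariant from moments just before $\tau$ to $\tau$ itself. I will derive a contradiction by showing that, using only the invariant as it held strictly before $\tau$, the invariant in fact holds at $\tau$.

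First, I obtain a pointwise error bound between perceived and true $x$-separations. For any pair of entities $e_i, e_j$, the bounded-speed assumption and the triangle inequality give $\bigl|\|\zeta_i(\tau)-\zeta_j(\tau)\| - \|\zeta_i(p_i(\tau))-\zeta_j(p_j(\tau))\|\bigr| \le \delta_i + \delta_j$, where $\delta_k := \tau - p_k(\tau)$. Using the true $x$-nearest neighbors of $e_i$ as witnesses to upper bound $\psd[\tau]{x}{i}$ in terms of $\sd[\tau]{x}{i}$, and the perceived $x$-nearest neighbors as witnesses to upper bound $\sd[\tau]{x}{i}$ in terms of $\psd[\tau]{x}{i}$, this implies
\[
|\psd[\tau]{x}{i} - \sd[\tau]{x}{i}| \;\le\; \delta_i + \max\bigl\{\delta_j : e_j \in \NN[\tau]{x}{i} \cup \pNN[\tau]{x}{i}\bigr\}.
\]

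Second, I bound each relevant $\delta_k$ in terms of $\sd[\tau]{x}{i}$. Assumption (ii) gives $\delta_k \le \lambda_{d,x}\,\psd[p_k(\tau)]{x}{k}/12$. Because the invariant held at the earlier time $p_k(\tau) < \tau$, the inductive hypothesis yields $\psd[p_k(\tau)]{x}{k} \le (3/2)\,\sd[p_k(\tau)]{x}{k}$, and because $\sd$ is $2$-Lipschitz in $t$, $\sd[p_k(\tau)]{x}{k} \le \sd[\tau]{x}{k} + 2\delta_k$. Solving yields $\delta_k \le \lambda_{d,x}\,\sd[\tau]{x}{k}/(8-2\lambda_{d,x})$; applied to $k=i$ this already gives $\delta_i \le \sd[\tau]{x}{i}/6$. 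For $e_k$ among the true or perceived $x$-nearest neighbors of $e_i$, the true distance $\|\zeta_i(\tau)-\zeta_k(\tau)\|$ is at most a constant multiple of $\sd[\tau]{x}{i}$ (directly for true neighbors, and via an extra $\delta_i+\delta_k$ drift term plus the invariant just before $\tau$ for perceived neighbors), so equations~(\ref{eqn:observationA}) and~(\ref{eqn:lambda}) give $\sd[\tau]{x}{k} \le 2\sd[\tau]{x}{i}/\lambda_{d,x}$. Substituting, the $\lambda_{d,x}$ factors cancel, yielding $\delta_k \le \sd[\tau]{x}{i}/3$.

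Combining the two steps gives $|\psd[\tau]{x}{i} - \sd[\tau]{x}{i}| \le \sd[\tau]{x}{i}/6 + \sd[\tau]{x}{i}/3 = \sd[\tau]{x}{i}/2$, so the invariant does hold at $\tau$, a contradiction. The main obstacle is the tight constant tracking: the factor $1/12$ in (ii) is calibrated precisely so that after the $3/2$ slack from the inductive hypothesis, the $1/\lambda_{d,x}$ slack from (\ref{eqn:lambda}), and the two-neighbor contribution $\delta_i + \max_k \delta_k$, the total error sums to exactly $\sd[\tau]{x}{i}/2$ with no room to spare. A secondary subtlety is the treatment of perceived nearest neighbors that are not true nearest neighbors: bounding the true distance from $e_i$ to such an $e_k$ requires chaining perceived-position triangle inequalities with the inductive invariant as it held at $p_k(\tau)$ (not at $\tau$), because we cannot yet invoke the invariant at $\tau$ itself in the middle of the contradiction argument.
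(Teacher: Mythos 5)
Your overall strategy is close to the paper's: assume a first-failure time $\tau$ (which you identify with a query event), and use assumption~(ii), the $2$-Lipschitz drift of $\sigma$, and the inductive hypothesis at $p_k(\tau)$ to bound each displacement $\delta_k = \tau - p_k(\tau)$. Your derivation $\delta_k \leq \frac{\lambda_{d,x}}{8-2\lambda_{d,x}}\sd[\tau]{x}{k} \le \frac{\lambda_{d,x}}{6}\sd[\tau]{x}{k}$ is correct, and your error-decomposition $|\psd[\tau]{x}{i} - \sd[\tau]{x}{i}| \le \delta_i + \max\{\delta_k : e_k \in \NN[\tau]{x}{i} \cup \pNN[\tau]{x}{i}\}$ is sound.

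However, there is a genuine gap in the final step: the claim that $\sd[\tau]{x}{k} \le 2\sd[\tau]{x}{i}/\lambda_{d,x}$, and hence $\delta_k \le \sd[\tau]{x}{i}/3$, holds ``for $e_k$ among the true \emph{or perceived} $x$-nearest neighbors.'' This is true for true neighbors (the chain $\|\zeta_i(\tau)-\zeta_k(\tau)\| \le \sd[\tau]{x}{i} + 2\rho$ together with equations~(\ref{eqn:observationA}) and~(\ref{eqn:lambda}) gives exactly $\sd[\tau]{x}{k} \le (2/\lambda_{d,x})\sd[\tau]{x}{i}$, so the inequality is already tight there). It fails for perceived neighbors: for $e_k \in \pNN[\tau]{x}{i}$ the best available bound on the true distance is $\|\zeta_i(\tau)-\zeta_k(\tau)\| \le \psd[\tau]{x}{i} + 2\rho + \delta_i + \delta_k$, and even after you have established the upper half of the invariant $\psd[\tau]{x}{i} \le \tfrac{3}{2}\sd[\tau]{x}{i}$ (which uses only true neighbors and is non-circular), this gives $\sd[\tau]{x}{k} \le \tfrac{5}{2}\sd[\tau]{x}{i} + 2\rho + \delta_i + \delta_k$. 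Solving the fixed point with $\delta_k \le \frac{\lambda_{d,x}}{6}\sd[\tau]{x}{k}$, $\delta_i \le \sd[\tau]{x}{i}/6$, and $\rho \le \frac{1-\lambda_{d,x}}{\lambda_{d,x}}\sd[\tau]{x}{i}$ yields $\delta_k \le \tfrac{8}{15}\sd[\tau]{x}{i}$ in the worst case ($\lambda_{d,x}\to 1$), not $\sd[\tau]{x}{i}/3$. Then $\delta_i + \max\delta_k$ can reach $\tfrac{1}{6}+\tfrac{8}{15}=\tfrac{7}{10}$ of $\sd[\tau]{x}{i}$, exceeding the $\sd[\tau]{x}{i}/2$ budget, so the lower half of the invariant is not established. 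You flag the perceived-neighbor treatment as a ``secondary subtlety'' to be handled ``via an extra $\delta_i+\delta_k$ drift term,'' but it is precisely this drift term that breaks the constant budget.

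The paper's proof sidesteps this by never trying to bound $\delta_k$ uniformly over a neighbor set. Instead it uses equation~(\ref{eqn:observationA}) to convert $\delta_i \le \frac{\lambda_{d,x}}{6}\sd[\tau]{x}{i}$ into the \emph{distance-dependent} bound $\delta_i \le \frac{\lambda_{d,x}}{6}(\|\zeta_i(\tau)-\zeta_j(\tau)\|+\sd[\tau]{x}{j})$, valid for every entity $e_i$. This makes the perceived-position error of each entity proportional to its distance from the queried entity $e_j$: close entities stay close in perception, and far entities (those outside $\NN[\tau]{x}{j}$) stay far, with both bounds multiplicative up to a controlled $\rho$-term. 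That proportionality is exactly what your uniform bound $\delta_k \le c\cdot\sd[\tau]{x}{i}$ gives up, and it is the essential ingredient missing from your argument. To repair the proof along your lines you would need to replace the ``max $\delta_k$ over neighbors'' step with the distance-proportional bound and argue about which entities can and cannot enter $\pNN[\tau]{x}{i}$, which is essentially what the paper does.
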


To obtain the preconditions of Lemma~\ref{lem:NEWPerception}, we could assume that 
all entities are queried very quickly using low granularity for a short initialization phase.
In Appendix~\ref{app:init}, we show how to use a modified version of the FTT scheme of Section~\ref{sec:fixedtarget} to obtain these preconditions using granularity that is competitive with any scheme that guarantees uncertainty degree at most $x$ from time $t_0$ onward.
This establishes:

\begin{restatable}{lem}{LemInit}
\label{lem:init}
For any
$\Delta$, $0 \le \Delta \le x$,
and any target time $t_0 \geq 0$,
there exists an initialization scheme 
that guarantees 
\begin{enumerate}
\item[(i)] $\sd[t_0]{x\!+\!\Delta}{i}/2 \le \psd[t_0]{x\!+\!\Delta}{i} \leq  3\sd[t_0]{x\!+\!\Delta}{i}/2$,
and
\item[(ii)] $t_0 - p_i(t_0) \leq \lambda_{d,x} \psd[p_i(t_0)]{x\!+\!\Delta}{i} / 12 $.
\end{enumerate}
using minimum query granularity over the interval $[0, t_0]$ that is at most $\Theta(\frac{x\!+\!\Delta}{1\!+\!\Delta})$ smaller than the minimum query granularity, over the interval $[0, (a+1)t_0]$,  used by any other scheme that guarantees uncertainty degree at most $x$ in the interval 
$[t_0, (a+1)t_0]$,
where $a = 64/(5\lambda_{d,x})$.
\end{restatable}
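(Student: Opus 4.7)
The plan is to modify the FTT$[(1+\beta)x]$ scheme of Section~\ref{sec:fixedtarget}, using $t_0$ as the target time, so that when it terminates each entity $e_i$ has been queried recently enough relative to its perceived $(1+\beta)x$-separation to satisfy both (i) and (ii) simultaneously.

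First, I would continue the FTT round-robin structure beyond its usual stopping point. Instead of halting a round-robin phase as soon as every entity is $(1+\beta)x$-degree-safe, I would impose the stricter termination criterion $t_0 - p_i(t_0) \leq \frac{\lambda_{d,x}}{12} \psd[p_i(t_0)]{(1+\beta)x}{i}$ for every $e_i$, which is precisely condition~(ii). Because $(1+\beta)x$-degree-safety already forces $t_0 - p_i(t_0)$ to be within a bounded constant factor of the corresponding perceived $(1+\beta)x$-separation (the projected region of $e_i$'s $(1+\beta)x$-th perceived neighbour lies outside that of $e_i$), and because consecutive FTT rounds halve the projected uncertainty radii, only $O(\log(1/\lambda_{d,x}))$ additional rounds are needed. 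Each such extra round shrinks the minimum query granularity by at most a factor of two (the survivor set never grows), so the FTT competitive factor of $\Theta((1+\beta)x/(1+\beta x))$ is preserved up to constants. The competitiveness claim of the present lemma then follows because any scheme $\mathbb{S}$ that maintains uncertainty degree at most $x$ throughout $[t_0,(a+1)t_0]$ in particular maintains it at the single time $t_0$, so the lower bound from the proof of Theorem~\ref{thm:FTT}---counting queries $\mathbb{S}$ must make inside $[0, t_0] \subseteq [0, (a+1)t_0]$ to avoid exceeding degree $x$ at $t_0$---applies directly to $\mathbb{S}$.

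Next, I would derive condition~(i) from condition~(ii) by a perception-versus-reality argument that mirrors the final calculations in the proof of Lemma~\ref{lem:NEWPerception}. Because $\zeta_i(t_0)$ differs from $\zeta_i(p_i(t_0))$ by at most $t_0 - p_i(t_0)$, and similarly for each perceived neighbour $e_j$, combining these displacement bounds with the triangle inequality, equation~(\ref{eqn:observationA}), and the inequality $\rho \leq \frac{1-\lambda_{d,x}}{\lambda_{d,x}} \sd[t_0]{x}{i}$ from equation~(\ref{eqn:lambda}) yields $\sd[t_0]{(1+\beta)x}{i}/2 \leq \psd[t_0]{(1+\beta)x}{i} \leq \frac{3}{2}\sd[t_0]{(1+\beta)x}{i}$.

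The main obstacle is reconciling the perceived-based termination test with the actual-based conclusion of (i): the modified scheme's stopping rule uses $\widetilde{\sigma}$, while condition~(i) is an assertion about $\sigma$. Handling this requires the constant-factor slack $\lambda_{d,x}/12$ built into the termination criterion, so that the gap between perceived and actual $(1+\beta)x$-separations at $t_0$ remains small enough to yield the $[1/2,3/2]$ window required by condition~(i)---the same circularity resolved inductively in the proof of Lemma~\ref{lem:NEWPerception}.
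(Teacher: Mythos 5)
Your high-level plan -- modify the FTT scheme to terminate when condition~(ii) is met, then derive~(i) from~(ii) via a perception-versus-reality calculation -- captures some of the paper's structure, but there is a genuine gap in the middle that the paper resolves with a different mechanism you do not reproduce.

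The crux is your claim that ``$(1+\beta)x$-degree-safety already forces $t_0 - p_i(t_0)$ to be within a bounded constant factor of the corresponding perceived $(1+\beta)x$-separation'' and that therefore ``only $O(\log(1/\lambda_{d,x}))$ additional rounds are needed.'' Safety of $e_i$ at the end of round $r$ gives only $\psd[p_i]{(1+\beta)x}{i} > t_0 - p_i$, i.e.\ ratio less than one. But the true $(1+\beta)x$-separation at $t_0$ can be arbitrarily small relative to the round-$r$ time scale $\tau/2^r$; in that case, as you keep querying $e_i$ in extra rounds, the perceived separation shrinks toward that small true separation, and the number of extra rounds needed for $t_0 - p_i \le (\lambda_{d,x}/12)\psd[p_i]{(1+\beta)x}{i}$ is unbounded, not $O(\log(1/\lambda_{d,x}))$. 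Once the $O(\log(1/\lambda_{d,x}))$ claim fails, so does the assertion that the modified scheme's minimum granularity is within a constant factor of FTT's, and so does your whole competitiveness argument, which rests on importing the fixed-time FTT lower bound verbatim.

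The paper handles this by a different idea: it replaces ``safe'' with a strictly stronger \emph{super-safe} criterion that demands separation at least $a b^s t_0$ from all but $(1+\beta)x - 1$ other projected regions, and it slows the round-robin (fraction $1/16$ of remaining time per round, so $b=15/16$, rather than $1/2$). An entity is declared super-safe, and set aside, exactly when its perceived separation is comparable to the time remaining, which is what makes conditions~(i) and~(ii) follow automatically (for $a \ge 64/(5\lambda_{d,x})$) without further querying. Crucially, the competitiveness lower bound is then proved by evaluating the competitor's degree at times $t_0 + a b^s t_0 / 2$, \emph{after} $t_0$ -- an entity that was not super-safe at time $t_0 - b^{s-1} t_0$ would not be $x$-safe at that later time without further queries. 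This is precisely why the lemma statement compares against schemes guaranteeing degree at most $x$ over the whole interval $[t_0, (a+1)t_0]$ rather than only at $t_0$; your proposal reduces to the single-time $t_0$ comparison and therefore cannot account for the queries your modified scheme must still make once an entity is FTT-safe but not yet condition-(ii)-compliant.

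Your derivation of~(i) from~(ii) via displacement bounds is plausible in spirit, but it is a secondary point; the missing ingredient is the super-safety criterion and the use of post-$t_0$ times in the lower bound, without which the scheme and its competitive analysis do not close.
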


\subsection{A Scheme to Maintain Low Degree: General Mobile Entities}

A \emph{bucket} is a 
set of entities and an associated 
time interval
whose length 
(the bucket's \emph{length}) 
is a power of two.
The 
$i$th bucket $B$ of
length $2^b$
has time interval
$T_B = [i2^b, (i+1) 2^b)$, for integers $i$ and $b$.
The time intervals 
of buckets of the same length
partition $[0,\infty)$, and a bucket of length $2^b$ spans exactly $2^s$ \emph{sub-buckets} of length $2^{b-s}$.

Entities are assigned to exactly one bucket at any moment in time. 
Membership of entity $e_j$ in a given bucket $B$ 
implies a commitment to query $e_j$ within the interval $T_B$. 
The \emph{basic} version of the BucketScheme (see Alg.~\ref{str:bucket}) fulfills these commitments by scheduling a query to $e_j$ at anytime within that time interval.
That is, any version of Schedule($e_j$, $B$) that allocates a query for $e_j$ at some time within 
$T_B$ satisfies the basic BucketScheme.
After an entity $e_j$ is queried, it is reassigned to a future bucket in a way that preserves (via Lemma~\ref{lem:NEWPerception}) the following invariants:
for all
$t' \in T_B$, 
(i) $\sd[t']{x}{j} /2 \leq \psd[t']{x}{j} \leq  3 \sd[t']{x}{j}/2$; and
(ii) $\sd[t']{x}{j} = \Theta(2^b)$, so $\int_{T_B} \frac{\dif{t}}{\sigma_i(x, t)} = \Theta(1)$.

\begin{scheme}
\caption{\textbf{BucketScheme$[x]$}}\label{str:bucket}
\begin{algorithmic}[1]
\State Assume perception-reality precondition properties hold at time $t_0$. \Comment{See Lemma~\ref{lem:init}}
\ForAll{entities $e_j$} \Comment{make initial query-time assignments}
\State Assign $e_j$ to the first bucket $B$ 
of length $2^b$ starting after time $t_0$, 
where 
$b = \lfloor \lg [(\lambda_{d,x} /24) \psd[t_0]{x}{j}] \rfloor$
\State Schedule($e_j$, $B$) \Comment{Assign $e_j$ a query time in interval of bucket $B$}
\EndFor
\Repeat 
\State Advance $t$ to the next query time (say to entity $e_j$)
\State Query $e_j$ 
\State Assign $e_j$ to the next bucket $B$ 
of length $2^b$ starting after time $t + 2^b$, 
where 
$b = \lfloor \lg [(\lambda_{d,x} /24) \psd[t]{x}{j}] \rfloor$
\State Schedule($e_j$, $B$) \Comment{Assign $e_j$ a query time in interval of bucket $B$}
\Until{}
\end{algorithmic}
\end{scheme}

\begin{restatable}{theorem}{ThmBAsuccess}\label{thm:BAsuccess}
The basic BucketScheme$[x]$ maintains uncertainty degree at most $x$ indefinitely. 
Furthermore, 
over any 
time interval $T$ in which the basic BucketScheme$[x]$ makes 
$3|\E|$ queries, 
$\phi_{\E,x}(T) = \Omega(|\E|) $.
\end{restatable}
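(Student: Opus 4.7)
The plan is to split the theorem into its two claims and handle them in sequence, leaning on the initialization of Lemma~\ref{lem:init} and the invariance provided by Lemma~\ref{lem:NEWPerception}. First I would check that the initialization scheme supplies both preconditions of Lemma~\ref{lem:NEWPerception} at time $t_0$: perception closeness for every entity and a suitably bounded gap from the last query to $t_0$. Then I would verify that the bucket assignment maintains precondition~(ii) on an ongoing basis, since when $e_j$ is queried at time $t$ and assigned to the first bucket of length $2^b \le (\lambda_{d,x}/24)\psd[t]{x}{j}$ beginning after $t+2^b$, the next query to $e_j$ takes place at some $t' \le t + 3\cdot 2^b$, giving $t' - t = O(\lambda_{d,x}\psd[t]{x}{j})$ within the constant required by Lemma~\ref{lem:NEWPerception}. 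An inductive argument over successive queries, applying Lemma~\ref{lem:NEWPerception} each time, then yields $\sd[t]{x}{i}/2 \le \psd[t]{x}{i} \le 3\sd[t]{x}{i}/2$ for every entity at every time $t \ge t_0$.

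With perception tracking reality, I would show the degree bound by a direct geometric argument. For any entity $e_j$ at time $t$, the uncertainty radius satisfies $t - p_j(t) = O(\lambda_{d,x})\,\sd[t]{x}{j}$, because the last-query perceived separation is a constant-factor approximation of the current true separation (movement of at most $2(t - p_j(t))$ changes $\sd{x}{}{}$ by at most that much, and this can be absorbed by adjusting $\lambda_{d,x}$ constants). Now fix a candidate non-neighbour $e_k$ with $\|\zeta_j(t) - \zeta_k(t)\| > \sd[t]{x}{j} + 2\rho$; invoking Equation~(\ref{eqn:observationA}) to bound $\sd[t]{x}{k}$ and Equation~(\ref{eqn:lambda}) to bound $\rho$ in terms of $\sd[t]{x}{j}$, a short calculation shows that $r_j + r_k = 2\rho + (t - p_j(t)) + (t - p_k(t))$ is strictly less than the true distance $\|\zeta_j(p_j(t)) - \zeta_k(p_k(t))\|$, so $u_j \cap u_k = \emptyset$. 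Hence only $e_j$'s $x$ closest entities contribute to $\delta^{U(t)}_j$, and uncertainty degree stays at most $x$.

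For the second claim I would use the fact that each query to $e_j$ is chargeable to a unique bucket $B$, of some length $2^b$, whose time interval $T_B$ contains that query. Combining the bucket definition with the perception-reality invariant and the fact that $\sd{x}{j}$ drifts by at most a constant factor across $T_B$, we get $\sigma_j(x,s) = \Theta(2^b)$ throughout $T_B$, so $\int_{T_B} \dif s / \sigma_j(x,s) = \Theta(1)$. Given an interval $T$ containing $3|\E|$ queries of the scheme, the queries to any single entity $e_j$ are associated with a consecutive sequence of buckets for $e_j$; at most the first and last of these buckets can extend outside $T$, contributing at most $2|\E|$ ``boundary'' buckets in total. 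The remaining at least $|\E|$ buckets lie entirely inside $T$, and each contributes $\Theta(1)$ to $\phi_{\E,x}(T) = \sum_j \int_T \dif t/\sigma_j(x,t)$, giving $\phi_{\E,x}(T) = \Omega(|\E|)$.

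The main obstacle is the inductive step that maintains Lemma~\ref{lem:NEWPerception}'s precondition~(ii) in the presence of entity motion, since the ``promptness bound'' is stated in terms of the perceived separation at the previous query time while the bucket length was chosen based on that same quantity; care is needed to track how $\psd{x}{j}$ evolves between queries (it can drift by at most the movement bound $2(t - p_j(t))$, itself controlled by the bucket length) so that the telescoped constants still fit within the $\lambda_{d,x}/12$ slack of Lemma~\ref{lem:NEWPerception}. Verifying that the chain of constants $(24,\,12,\,3/2)$ is consistent, and that the $\Theta(1)$ integral per bucket survives a slightly non-uniform $\sigma_j(x,\cdot)$, is the technically finicky part; everything else is a bookkeeping exercise built on the two preceding lemmas.
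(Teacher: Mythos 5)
Your proposal tracks the paper's proof essentially step for step: establish Lemma~\ref{lem:NEWPerception}'s preconditions via Lemma~\ref{lem:init}, confirm that the bucket assignment preserves precondition~(ii) between successive queries, deduce the perception-reality invariant and hence a bound $t - p_i(t) = O(\lambda_{d,x})\,\sd[t]{x}{i}$, carry out the disjointness argument for non-neighbours (the paper simply cites the identical analysis from Lemma~\ref{lem:FWRRupper} here rather than re-deriving it inline, as you do), and finally charge $\Theta(1)$ of $\phi_{\E,x}(T)$ to each of the at least $|\E|$ fully-spanned buckets among the $3|\E|$ queries. The constant-chasing worry you flag in the closing paragraph is genuine, but the paper likewise dispatches it with ``it is straightforward to confirm,'' so your treatment is neither more nor less rigorous than the paper's on that point.
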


\begin{proof}
(See Appendix~\ref{sec:basicbucket} for a full proof.)
It is straightforward to confirm that the assignment of entities to buckets (specified in line 7) ensures that the time between successive queries to any entity $e_i$ satisfies precondition (ii) of Lemma~\ref{lem:NEWPerception},
and that new bucket assignments are disjoint from previous bucket assignments.
From the proof of Lemma~\ref{lem:NEWPerception} we see that this in turn implies that 
$t - p_i(t) \le \frac{\lambda_{d, x}}{\lambda_{d, x} +2} \sigma_i(x, t)$, for all entities $e_i$ and all $t \ge t_0$. 
Hence, following the identical analysis used in the proof of Lemma~\ref{lem:FWRRupper}, we conclude that uncertainty degree at most $x$ is maintained indefinitely.

If BucketScheme$[x]$ makes $3|\E|$ queries over $T$ then, among these, it must make at least $|\E|$ queries to entities in buckets that are fully spanned by $T$. 
Since each entity in each fully spanned bucket contributes $\Theta(1)$ to $\phi_{\E,x}(T)$ (and the buckets occupied by any one entity over time are disjoint),
it follows that $\phi_{\E,x}(T) = \Omega(|\E|)$.
\end{proof}

A more fully specified implementation of the BucketScheme is not only competitive in terms of total queries over reasonably small intervals, but also competitive in terms of query granularity. 
The idea of this \emph{refined} BucketScheme
is to 
replace the simple scheduling policy Schedule of the basic BucketScheme
with a recursive policy Schedule* that generates a 
refined reassignment of entities to buckets. 
Whenever a bucket $B$ of length $2^{b}$ has been assigned two entities, these entities are immediately reassigned, one to each of the two sub-buckets of $B$ of length $2^{b-1}$. 
In this way, when all reassignments are finished, all of the entities are assigned to their own
buckets.
The entity associated with a  bucket $B$ has a \emph{tentative next query time} at the midpoint of $B$. 
Tentative query times are updated of course when entities are reassigned (see Alg.~\ref{alg:schedule}). 
At any point in time the next query is made to the entity with the earliest associated tentative next query time.
Note that since distinct buckets have distinct midpoints, and no bucket has more than one associated entity, the current set of tentative next query times contains no duplicates.
In fact, for any two tentative query times associated with entities in buckets $B$ and $B'$, it must be that either $B$ and $B'$ are disjoint, or the smaller bucket is a sub-bucket of one half of the larger bucket.

Recall from the invariant properties of bucket assignments in the basic BucketScheme
that if $e_i$ is assigned to bucket $B$, then 
$\sigma_i(x, t) = \Theta(|T_B|)$
(i.e. $1/ \sigma_i(x, t) = \Theta(1/|T_B|)$), for $t \in T_B$.
In the refined BucketScheme this property is generalized to:
(i) if $e_i$ is assigned to bucket $B$, then there is a subset of entities $S_B$, including $e_i$, such that 
$\sum_{e_j \in S_B} 1/ \sigma_j(x, t) = \Theta(1/|T_B|)$, for $t \in T_B$, and
(ii) if $T_B \cap T_{B'} \neq \emptyset$ then $S_B \cup S_{B'} = \emptyset$.
(It is straightforward to confirm that this property is preserved by the reassignment of entities in the 
bucket structure.)

Since the gap between successive queries contains half of the smaller of the two
buckets containing the two entities, it follows that every gap between queries has an associated integral of $\sum_{e_j \in \E} 1/ \sigma_j(x,t)$ that is $\Theta(1)$.  
It follows from this that the stationary frequency demand  is inversely proportional to the instantaneous granularity at the time of every query.

\begin{scheme}
\caption{\textbf{Schedule*$(e_j, \Bsec)$} \hfill $\triangleright$ used by the refined BucketScheme}\label{alg:schedule}
\begin{algorithmic}[1]
\If{bucket $\Bsec$ already contains an entity $e_i$} \Comment{$\Bsec$ contains at most one}
  \State Unassign $e_i$ from $B$
  \State Schedule*$(e_i, \Bsec_{\text{first}})$
  \Comment{$\Bsec_{\text{first}}$ spans the first half-interval of $\Bsec$}
  \State Schedule*$(e_j, \Bsec_{\text{second}})$
  \Comment{$\Bsec_{\text{second}}$ spans the second half-interval of $\Bsec$}
\Else
  \State Assign $e_j$ to bucket $\Bsec$ with query time
  at the midpoint of $\Bsec$.
\EndIf
\end{algorithmic}
\end{scheme}

We summarize with:

\begin{lemma}\label{lem:totalqueryanalysis2}
Over any 
time interval $T$ in which the refined BucketScheme$[x]$ makes 
$3|\E|$ queries, 
$\phi_{\E,x}(T) = \Omega(|\E|) $.
Furthermore, at any time the query granularity is inversely proportional to the stationary query demand.
\end{lemma}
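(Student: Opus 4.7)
The plan is to handle the two claims separately, reusing the analysis of the basic BucketScheme for the first, and using an induction over the recursion tree of Schedule* for the second.

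For $\phi_{\E,x}(T)=\Omega(|\E|)$, I would first observe that Schedule* only refines \emph{within} the outer bucket $B$ that line 7 of Alg.~\ref{str:bucket} has already chosen, so the invariant $\sigma_i(x,t)=\Theta(|T_B|)$ for $t\in T_B$ (established in the proof of Theorem~\ref{thm:BAsuccess}) is inherited unchanged by the refined scheme. The pigeonhole argument of that theorem then carries over verbatim: of any $3|\E|$ queries in $T$, at most $2|\E|$ can lie in outer buckets that straddle an endpoint of $T$, so at least $|\E|$ lie in buckets $B$ with $T_B\subseteq T$, and each contributes $\int_{T_B}\dif t/\sigma_i(x,t)=\Theta(1)$ to $\phi_{\E,x}(T)$.

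For the granularity claim, I would formalize the two invariants sketched just before the lemma: (i) every currently occupied bucket $B$ holds exactly one entity, and (ii) each such $B$ carries a subset $S_B\subseteq\E$ containing its resident entity with $\sum_{e_j\in S_B}1/\sigma_j(x,t)=\Theta(1/|T_B|)$ for $t\in T_B$, the $S_B$'s being pairwise disjoint whenever the corresponding intervals overlap. The proof is by induction on the recursion depth of Schedule*. The base case is the \textbf{else} branch of Alg.~\ref{alg:schedule}: when $B$ is empty, line 6 assigns $e_j$ alone, and the choice of exponent $b$ at line 7 of Alg.~\ref{str:bucket} together with Lemma~\ref{lem:NEWPerception} yields $\sigma_j(x,t)=\Theta(|T_B|)$, so $S_B=\{e_j\}$ suffices. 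The inductive step handles the \textbf{if} branch: the incumbent of $B$ and the newcomer are split across $B_{\text{first}}$ and $B_{\text{second}}$, each of half the length, and the subsets $S_{B_{\text{first}}}$, $S_{B_{\text{second}}}$ produced by the recursive calls absorb the combined demand, so doubling the reciprocal length on the right-hand side exactly matches the accumulated contribution that $S_B$ would have made in the parent.

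I expect the main obstacle to be the bookkeeping of these $S_B$'s across recursive splits and as time flows between queries, in particular ensuring that the hidden $\Theta$-constants in invariant (ii) do not drift after repeated reassignments. Once the invariants are in place the granularity claim follows immediately: the next-query rule always picks the earliest midpoint, so two consecutive queries occur at midpoints of occupied buckets $B$ and $B'$ whose time intervals are either disjoint or nested, and in either case the gap between them is at least half the length of the smaller bucket $B_{\min}$. Applying invariant (ii) to $B_{\min}$ shows that the instantaneous stationary demand $\sum_{e_j\in\E}1/\sigma_j(x,t)$ near that time is $\Theta(1/|T_{B_{\min}}|)$, yielding the claimed inverse proportionality between query granularity and stationary frequency demand.
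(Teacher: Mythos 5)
Your proposal is correct and follows essentially the same route the paper takes: the first claim by transplanting the pigeonhole argument from Theorem~\ref{thm:BAsuccess} via the outer (line-7) bucket assignment, and the second by establishing invariants (i)--(ii) and applying them to the gap between consecutive tentative query times (which is at least half the smaller bucket). The paper states these invariants without carrying out the induction over Schedule*; your sketch is a reasonable elaboration of the same idea, and you correctly flag that the one delicate point is showing the hidden constants in invariant (ii) do not compound across repeated splits — which works because the two sub-buckets are time-disjoint and may therefore reuse the entities of $S_B$ together with the newcomer, doubling the demand in step with the halved length.
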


Combining 
Lemma~\ref{lem:totalqueryanalysis2}
and Corollary~\ref{cor:mobilePlyLB}, 
we reach our main result:

\begin{restatable}{theorem}{ThmFullAdaptive}\label{thm:fulladaptive}
For any $\Delta$,
$0 \le \Delta \le x$,
the refined BucketScheme$[x\!+\!\Delta]$ 
maintains uncertainty degree at most $x\!+\!\Delta$ and, over all sufficiently large time intervals $T$, is competitive, in terms of total queries over $T$ or some small shift $\overrightarrow{T}$ of $T$, with any query scheme that maintains uncertainty ply at most $x$ over $T \cup \overrightarrow{T}$.
The competitive factor is
$O(\frac{x\!+\!\Delta}{1\!+\!\Delta})$.
Furthermore, at all times it uses query granularity that is inversely proportional to the stationary frequency demand.
\end{restatable}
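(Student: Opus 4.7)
The plan is to combine three ingredients already in hand: the maintenance/granularity guarantees already packaged for the refined BucketScheme (Theorem~\ref{thm:BAsuccess} and Lemma~\ref{lem:totalqueryanalysis2}), an upper bound on the number of queries the refined scheme spends over $T$, and the matching lower bound from Corollary~\ref{cor:mobilePlyLB}. Throughout, the parameter of the scheme is $(1+\beta)x$.

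First I would dispatch the non-competitive parts. The refined Schedule* only relocates query times within the bucket to which an entity is already assigned, so the bucket-to-entity assignment rule (and hence the gap bound $t - p_i(t) \le \lambda_{d,(1+\beta)x}\,\psd[p_i(t)]{(1+\beta)x}{i}/12$) of the basic scheme is inherited unchanged. Combining this with Lemma~\ref{lem:init} to establish the precondition of Lemma~\ref{lem:NEWPerception}, and then running the same chain of inequalities used in the proof of Lemma~\ref{lem:FWRRupper}, yields that uncertainty degree stays at most $(1+\beta)x$ at all times. The granularity-versus-stationary-demand claim is already the second half of Lemma~\ref{lem:totalqueryanalysis2}.

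For the competitive bound, let $Q(T)$ denote the queries made by the refined BucketScheme$[(1+\beta)x]$ over $T$, and let $S(T)$ denote the queries made over $T$ by any scheme $\mathbb{S}$ that maintains ply at most $x$ over $T^+=T\cup\overrightarrow{T}$. On the upper side, the bucket invariant $\sigma_i((1+\beta)x,t)=\Theta(|T_B|)$ for $e_i\in S_B$, together with the disjointness of the sets $S_B$ over overlapping time intervals, implies that every query charged to a bucket fully contained in $T$ contributes $\Theta(1)$ to $\phi_{\E,(1+\beta)x}(T)$; boundary buckets (those properly straddling an endpoint of $T$) number at most $O(|\E|)$, since each entity can touch at most two such boundaries. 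Hence
\[
Q(T) \;=\; O\bigl(\phi_{\E,(1+\beta)x}(T)\bigr) + O(|\E|).
\]
On the lower side, provided $T$ is large enough that $\phi_{\E,(1+\beta)x}(T)\ge 1010|\E|$ (the "sufficiently large" hypothesis), Corollary~\ref{cor:mobilePlyLB} gives
\[
\max\{S(T),S(\overrightarrow{T})\} \;\ge\; \frac{(\beta x+1)\,\lambda_{d,(1+\beta)x}}{75(1+\beta)A_{d,3}\,x}\;\phi_{\E,(1+\beta)x}(T).
\]
Under the same hypothesis the additive $O(|\E|)$ error above is absorbed into $O(\phi_{\E,(1+\beta)x}(T))$, so taking the ratio yields a competitive factor $O\!\bigl(\tfrac{(1+\beta)x}{\beta x+1}\bigr)$. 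Substituting $\beta=0$ gives $O(x)$; for $\beta>0$ we use $\beta x+1\ge \beta x$ to obtain $O((1+\beta)/\beta)$, as required.

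The main obstacle is the bookkeeping that links "queries of BucketScheme in $T$" to $\phi_{\E,(1+\beta)x}(T)$: one must pay attention to (a) boundary buckets that are only partially inside $T$, and (b) the fact that at the instant a bucket is scheduled its entity's perceived separation, not its true separation, is what governs the bucket length. Both issues are handled by the perception-reality bound of Lemma~\ref{lem:NEWPerception} (which relates $\psd{(1+\beta)x}{i}$ and $\sd{(1+\beta)x}{i}$ within a constant factor), and by absorbing the $O(|\E|)$ boundary term into the hypothesis $\phi_{\E,(1+\beta)x}(T)=\Omega(|\E|)$ that is already forced by Corollary~\ref{cor:mobilePlyLB}. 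Once this calibration is in place, the theorem follows by a direct ratio of the upper and lower bounds.
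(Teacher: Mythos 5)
Your proposal is correct and matches the paper's approach, which simply states that the theorem follows by combining Lemma~\ref{lem:totalqueryanalysis2} with Corollary~\ref{cor:mobilePlyLB}; you spell out the charging argument $Q(T)=O(\phi_{\E,(1+\beta)x}(T))+O(|\E|)$ that the paper leaves implicit. One small misstatement to flag: the refined-scheme invariant is $\sum_{e_j\in S_B}1/\sigma_j((1+\beta)x,t)=\Theta(1/|T_B|)$ for $t\in T_B$, not the pointwise claim $\sigma_i((1+\beta)x,t)=\Theta(|T_B|)$ for \emph{every} $e_i\in S_B$ (which would over-count when $|S_B|>1$); your downstream bookkeeping in fact only needs the correct aggregate version together with the disjointness of $S_B$ across overlapping bucket intervals, so the argument stands.
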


Appendix~\ref{ex:MobilePlyFreqX} describes a collection of mobile entities
for which every query scheme that maintains uncertainty
ply at most $x$
at all times 
in a specified time interval
needs to use 
a query total that is at most a factor 
$\frac{x+2\Delta +1}{2\Delta + 3}$ smaller 
(and hence 
minimum query granularity that is
at most a factor $\frac{2\Delta + 3}{x+2\Delta +1}$ larger)
over the full interval
than 
that used by the best query scheme for achieving uncertainty
degree at most $x\!+\!\Delta $
at all times in the same interval.
It follows that the competitive factor on total queries realized by the 
refined BucketScheme, for maintaining uncertainty degree at most $x\!+\!\Delta$ 
(or for maintaining uncertainty ply at most $x\!+\!\Delta$ by maintaining uncertainty degree at most $x\!+\!\Delta$) cannot be improved by more than a constant factor in general.

\section{Discussion}

\subsection{Motivating applications revisited}
We return briefly to the motivating applications mentioned in the introduction. 
For the collision avoidance application, recall that by maintaining uncertainty degree at most $x$, we maintain (using optimal query frequency) for each robot (entity) $e_i$ a certificate
identifying the, at most $x-1$, other robots that could potentially collide with $e_i$ (those warranting more careful local monitoring).
As described in the next sub-section, it is straightforward to make our results
even more directly useful, in this and other applications, by strengthening the notion of encroachment to hold when the encroachment threshold is any positive constant.

For the channel assignment application, observe that when our scheme queries a transmission source $e_i$ (with associated broadcast range), it schedules the next query of $e_i$ knowing the set $E$ of entities, other than $e_i$ itself, whose broadcast ranges may potentially conflict with that of $e_i$ from now until that query. 
If $e_i$ is assigned a broadcast channel that differs from the broadcast channels assigned to entities in $E$, this requires at most $|E|+1$ channels, which if the scheme maintains uncertainty degree at most $x$, is at most $x$. (Note that channel assignments are updated \emph{locally}, i.e., only the assignment of the just-queried entity changes.)
Our scheme guarantees uncertainty degree $x\!+\!\Delta$ using a query frequency that is 
(up to a constant factor) optimally competitive with that required of any scheme to maintain uncertainty ply (which bounds from above the number of broadcast channels used to avoid potential broadcast interference) at most $x$. 
As we describe in the next subsection, our assumption of disjoint entities (i.e. broadcast ranges) is easily relaxed to permit intersections as long as the broadcast centres remain separated by at least some fixed positive distance.

\subsection{Generalizations of our Model and Analysis}

We describe below several modifications to our model and analyses that make
our query optimization framework more broadly applicable. 

\paragraph*{Relaxing the assumption on the encroachment threshold and entity disjointness }

Without changing the units of distance and time, we can model a collection of unit-radius entities  
any pair of which possibly intersect, but whose centres always maintain distance at least some 
positive constant $\rho_0 < 2$, by simply 
scaling the constant $c_{d, x}$ by $\rho_0/2$ (and the constant $\lambda_{d, x}$
accordingly.
Similarly (and simultaneously), we can model a collection of unit-radius entities with encroachment threshold $\Xi > 2$ by (i) changing the \emph{basic uncertainty radius} (the radius of the uncertainty region of an entity immediately after it has been queried) to $\Xi/2$ (thereby ensuring that entities with disjoint uncertainty regions do not encroach one another), and (ii) changing $X_d$ to be the largest $x$ such that $c_{d, x} \ge \Xi-2$ (since for $x$ exceeding this changed $X_d$ there can be at most $x-1$ entities that are within the encroachment threshold of any fixed entity).

\paragraph*{Relaxing the assumption of uniform entity extent}
We have assumed that all entities are $d$-dimensional balls with the same extent (radius). Relaxing this assumption impacts the relationship between the $x$-radius and $x$-separation of entities, captured in Equation~\eqref{eqn:lambda}. 
Nevertheless, if entity extents differ by at most a constant factor, it is straightforward to modify the constants $\lambda_{d, x}$ and $X_d$, so that all of our results continue to hold. 

\paragraph*{Relaxing the assumption of uniform entity speed}
Similarly, the reader will not be surprised by the fact that our results are essentially unchanged if our assumption that all entities have the same (unit) bound on their maximum speed is relaxed to allow speed bounds that differ by at most a constant factor. 
Allowing non-constant factor differences in speed bounds creates some additional challenges and some helpful new insights into the nature of our results. Surprisingly perhaps, our query schemes can be generalized in a rather straightforward way to accommodate such a change. 
However, our competitiveness results for maintaining bounded uncertainty degree only hold with respect to other (even clairvoyant) schemes with the same objective. 
This is unavoidable
since, when entities have arbitrarily different maximum speeds, 
uncertainty configurations with maximum ply $x$ could have arbitrarily large degree.

\paragraph*{Exploiting uniformity of entity distributions}
As previously mentioned, our competitive bounds can be further strengthened by making assumptions about the \emph{uniformity} of entity distributions.  
Define 
$\mu_{\E,x} = \frac{2^{1/d} \sum_{e_i \in \E}  \frac{1}{r_i(2 x)}}{\sum_{e_i \in \E}  \frac{1}{r_i(x)}}$.
The expression $\mu_{\E,x}$ can be viewed as a measure of the  $x$-uniformity of the distribution of entities in $\E$ at one fixed moment in time:
a completely uniform distribution would have $\mu_{\E,x} \approx 1$ and a large collection of isolated $x$-clusters could have $\mu_{\E,x}$ arbitrarily small. 
The uniformity-sensitivity of our query schemes is expressed in terms of $\mu_{\E,x}$.
More generally, define
\[
\mu_{\E,x}(T)= \frac{ 2^{1/d} \sum_{e_i \in \E} \int_T \frac{\dif{t}}{r_i(2x, t)}}
 { \sum_{e_i \in \E} \int_T \frac{\dif{t}}{r_i(x, t)}}
\]
as a measure of uniformity of the entity configurations over time interval $T$.

Revisiting the proof of Theorem~\ref{thm:FWRRplycompetitive}, we see that the ${\rm FWRR}[x]$ query scheme   uses a total number of queries over any 
time interval $T$
that is competitive
with any scheme that maintains uncertainty configurations with uncertainty ply at most $x$ over $T$, 
and the competitive factor is
$O(\frac{1}{\mu_{\E,x}})$, 
provided $|T| \phi_{\E, 2x} \ge 2|\E|$.
This follows because
\[
\phi_{\E, 2x} \ge \sum_{e_i \in \E}  \frac{1}{r_i(2 x)}
= \frac{\mu_{\E,x}}{2^{1/d}} \sum_{e_i \in \E}  \frac{1}{r_i(x)}
\ge \frac{\mu_{\E,x} \lambda_{d,x}}{2^{1/d}} \phi_{\E, x}.
\]

Similarly, by revisiting the proof of Theorem~\ref{thm:fulladaptive}, we see that,
for all sufficiently large time intervals $T$,
the refined BucketScheme$[x]$ 
uses a of total number of queries over $T$, or some small shift $\overrightarrow{T}$ of $T$, that is competitive with any query scheme that maintains uncertainty ply at most $x$ over $T \cup \overrightarrow{T}$.
The competitive factor is
$O(\frac{1}{\mu_{\E,x}(T)})$ when 
the entity set $\E$ has $x$-uniformity $\mu_{\E,x}(T)$ over $T$.
In this case we make use of the fact that 
$\phi_{\E, 2x}(T) \ge \frac{\mu_{\E,x}(T) \lambda_{d,x}}{2^{1/d}} \phi_{\E, x}(T)$.

\paragraph*{Decentralization of our query schemes}

Our query model implicitly assumes that query decisions are centralized; location queries are issued from a single source. 
However, it is not hard to see that our query schemes, assuming suitable initial synchronization, could be fully decentralized, with location queries to entity $e_i$ replaced by location broadcasts from entity $e_i$. 
This is particularly interesting in the situation where entity speed
bounds could 
differ, 
since an entity can decide when to make its next location broadcast without knowing the speed bounds associated with other entities.

\bibliography {refs}

\appendix

%%%%% Appendices %%%%%%%%%%%

\section{Proof of Lemma~\ref{lem:ballcoverNEW}}\label{app:BallCoverNEW}

\LemBallCoverNEW*

\begin{proof}
(The structure of our proof resembles that used in Lemma 1 of \cite{BEK2019}.)
Let 
$E \subseteq \E$ be the set of entities $e_i$ whose $\hat{x}$-ball $B^Z_i(\hat{x})$ 
intersects $e_*$.
Let $\Gamma^Z_i(\hat{x})$ denote
the set of at most $\hat{x}$ entities
(including $e_i$) in $\E$ that intersect (as balls) the interior of $B^Z_i(\hat{x})$.
We construct a subset $I \subseteq E$ 
incrementally
by (i) selecting the un-eliminated entity $e_i \in E$ with the largest $\hat{x}$-radius, (ii) eliminating all entities in $\Gamma^Z_i(\hat{x})$ (including $e_i$ itself) from $E$, and (iii) repeating, until $E$ is empty.
It follows that (i) since we eliminate in non-increasing order of $\hat{x}$-radius, for all $e_i \in I$,
the interior of $B^Z_i(\hat{x})$
intersects only one entity from $I$ (namely, $e_i$), (ii) since 
$\Gamma^Z_i(\hat{x})$ contains at most $\hat{x}$ entities from $E$, $|I| \geq |E|/\hat{x}$, and (iii) if $|I|>1$ then 
$e_* \not \in I$.

Suppose that $|I|>1$; otherwise $|E| \le \hat{x}$, and nothing remains to be proved. Let $e_i$ and $e_j$ be any two distinct elements of $I$.
Then $\|z _i - z_j\| 
\ge \max \{ r^Z_i(\hat{x}) + 1, \;r^Z_j(\hat{x}) + 1 \}
\ge \max \{\|z_i - z_*\|, \;\|z_j - z_*\| \}$,
where $z_*$ denotes the centre of $e_*$.
Thus, the vector $v_i$ from $z_*$ to $z_i$ and the vector $v_j$ from $z_*$ to $z_j$ must form an angle at least $\pi/3 = 2\arcsin(1/2)$ 
(which occurs when $\|z _i - z_j\| = 
\|z_i - z_*\| = \|z_j - z_*\|$). 
But there can be no more than $5^d$ vectors from $z_*$ in 
$\mathbb{R}^d$ whose pairwise separation is at least 
$2\arcsin(1/2)$. (To see this, note that (i) balls of radius $1$ at distance $4$ from $z_*$ along each vector, must be disjoint, and (ii) there are at most $5^d$ disjoint balls of unit radius inside a ball of radius $5$ centred at $z_*$.)
Hence $|E| \leq 5^d \hat{x}$.
\end{proof}

\section{Query Optimization at a Fixed Target Time}\label{app:FTT}

If entities are stationary, the intrinsic fixed-target granularity, for any congestion measure, can in principle be realized, up to a factor of two, by an algorithm that first queries all of the entities, until half of the time to the target has expired (using granularity 
$\frac{\tau}{2 |\E|}$)
and then optimizes, with respect to the given congestion measure, taking advantage of the knowledge of entity positions.

Despite the fact that realizing the intrinsic fixed-target granularity exactly is NP-hard (this follows directly from Theorem~2.2 of~\cite{EKLS-SoCG13}),
for stationary entities it is straightforward to achieve degree at most $x$ at a fixed target time using query granularity that is within a constant factor of the intrinsic fixed-target granularity. 
This follows from the observations that 
(i) every entity must be queried within time corresponding to its $x$-separation,
prior to the target time, in order to achieve degree at most $x$ at the target time, and 
(ii) if every entity is queried within time corresponding to a fraction $h$ %one-third 
of its $x$-separation, prior to the target time, then the uncertainty degree at the target time is guaranteed to be at most $x$, provided $h < \lambda_{d,x}/3$.
The latter follows because
any entity $e_j$ whose separation from $e_i$ is $s \ge \sigma_i(x)$ 
has the property that $\sigma_j(x) \le s + 2
+ \sigma_i(x) \le 2s + 2$,
by equation~(\ref{eqn:observationA}), and
hence, %(otherwise the $x$-neighbourhood of $e_j$ would contain the $x$-neighbourhood of $e_i$), 
using equation~(\ref{eqn:lambda}),
\[ 
\sigma_i(x) + \sigma_j(x) \le 3s + 2 
\le (3 + 2\frac{1- \lambda_{d,x}}{\lambda_{d,x}}) s = \frac{\lambda_{d,x} + 2}{\lambda_{d,x}} s < s/h.
\]

Guaranteeing that a particular measure of congestion potential is at most some specified value $x$ at the fixed target time $\tau$ 
%time units into the future 
is obviously more of a challenge for mobile entities.
We define the {\em projected uncertainty region} of an entity, at any moment in time, to be the uncertainty region for that entity that would result if the entity experiences no further queries before the target time.  
We say that an entity is {\em $x$-ply-safe} at a particular time if its projected uncertainty region 
has no point in common with
the projected uncertainty regions of more than $x-1$ other entities (so that it could not possibly have uncertainty ply exceeding $x$ at the target time).
Similarly, we say that an entity is {\em $x$-degree-safe} at a particular time if its projected uncertainty region 
intersects 
the projected uncertainty regions of at most $x-1$ other entities (so that its $x$-separation at the target time is guaranteed to be positive).

The following Fixed-Target-Time(FTT) query scheme 
shows that uncertainty degree at most $x\!+\!\Delta$ can be guaranteed at a fixed target time 
using minimum query granularity that is at most
$\Theta(\frac{x\!+\!\Delta}{1\!+\!\Delta})$ smaller than
that used by any query scheme that guarantees uncertainty degree at most $x$.
Since the uncertainty regions of all entities are unbounded at time $0$, 
none of the entities are $(x\!+\!\Delta)$-degree-safe to start (assuming $x\!+\!\Delta < n$). 
Furthermore any scheme, including a clairvoyant scheme, must 
query all but $x$ of the entities at least once in order to avoid ply greater than $x$ at the target time.
The FTT$[x\!+\!\Delta]$ scheme starts by querying all entities in a single round 
using query granularity $\tau/ (2n)$,
which is $O(1)$-competitive, assuming $n-(x\!+\!\Delta) = \Omega(x\!+\!\Delta)$, with what must be done by any other scheme. 

At this point, the FTT$[x\!+\!\Delta]$ scheme identifies two sets of entities (i) the $n_1$ entities that are not yet $(x\!+\!\Delta)$-degree-safe (the \emph{unsafe survivors}), and (ii) the $m_1$ entities that are $(x\!+\!\Delta)$-degree-safe and whose projected uncertainty region intersects the projected uncertainty region of one or more of the unsafe survivors (the \emph{safe survivors}). 
All other entities are set aside and attract no further queries.

The scheme then queries, in a second round, all $n_1 + m_1$ survivors
using query granularity $\frac{\tau}{4( n_1 + m_1)}$. 
In general, after the $r$th round, the scheme 
identifies $n_r$ unsafe survivors and $m_r$ safe survivors, which,
assuming $n_r + m_r >0$,
continue into an $(r+1)$st round using granularity 
$\frac{\tau}{2^{r+1}( n_r + m_r)}$.
The $r$th round completes at time $\tau\!-\!\tau/2^r$. 
Furthermore, all entities that have not been set aside have a projected uncertainty region whose radius is in the range
$(1 + \tau /2^r, 1 + \tau/2^{r-1}]$.

\FixedTargetTimeTheorem*

\begin{proof}
We claim that any query
scheme 
$\AAA$
that guarantees uncertainty degree at most $x$ at time $\tau$ must use at least 
$\Theta(\frac{(n_r + m_r)(1\!+\!\Delta)}{x\!+\!\Delta})$ 
queries after the start of the $r$th query round; any fewer queries
would 
result in one or more entities having degree greater than $x$ at the target time.

To see this observe first that each of the $n_r$ unsafe survivors is either queried by $\AAA$ after the start of the $r$th query round or has its projected uncertainty degree reduced below $x$ 
by at least $1\!+\!\Delta$ 
queries to
its projected uncertainty neighbours after the start of the $r$th query round. 
Assuming that fewer than $n_r/2$ unsafe survivors are queried by $\AAA$ after the start of the $r$th query round, we argue that 
at least $\frac{n_r (1\!+\!\Delta)}{2 \cdot 4^d (x\!+\!\Delta)}$ 
queries must be made after the start of the $r$th query round to reduce below $x$ the 
projected uncertainty degree of the remaining unsafe survivors.

Note that any query 
after the start of the $r$th round 
to an entity set aside in an earlier round 
cannot serve to 
lower the projected uncertainty degree of
any of the $n_r$ unsafe survivors.
Furthermore, any query to one of the survivors of the $(r-1)$st round can serve to
decrease by one the projected uncertainty degree of
at most $4^d (x\!+\!\Delta)$ of the unsafe survivors whose uncertainty degree is at most $x\!+\!\Delta$.
(This follows because (i) the projected uncertainty regions of all survivors are within a factor of $2$ in size, and (ii) any collection of $4^d \hat{x}$ unit radius balls that are all contained in a ball of radius $4$, must have ply at least $\hat{x}$.)
Thus any scheme
that guarantees uncertainty degree at most $x$ at time $\tau$ must make 
at least $\frac{n_r (1\!+\!\Delta)}{2 \cdot 4^d (x\!+\!\Delta)}$ 
queries after the start of the $r$th query round.

Similarly, observe that each of the $m_r$ safe survivors must have each of its unsafe neighbours satisfied in the sense described above. 
But,
since the projected uncertainty regions of all survivors are within a factor of $2$ in size, each query that serves to lower the projected uncertainty degree of an unsafe neighbour of some safe survivor $e_i$ must be to an entity $e_j$ that has the projected uncertainty region of $e_i$ in its projected uncertainty \emph{near-neighbourhood} (the ball centred at $z_j$, whose radius is $9$ times the projected uncertainty radius of $e_j$).
But $e_j$ has at most $18^d (x\!+\!\Delta)$ such safe near-neighbours, since
any collection of $18^d \hat{x}$ unit radius balls that are all contained in  a ball of radius $18$, must have ply at least $\hat{x}$.

It follows that, even if a query to $e_j$ lowers the projected uncertainty degree of all of the unsafe neighbours of $e_i$, a total of at least $\frac{m_r (1\!+\!\Delta)}{18^d (x\!+\!\Delta)}$ queries must be made 
after the start of the $r$th query round
by any scheme
that guarantees uncertainty degree at most $x$ at time $\tau$.

Thus, 
query scheme $\AAA$
must use at least 
$\max\{\frac{n_r (1\!+\!\Delta)}{2 \cdot 4^d (x\!+\!\Delta)}),\;
\frac{m_r (1\!+\!\Delta)}{18^d (x\!+\!\Delta)} \}\\
= \max\{\frac{n_r}{2 \cdot 4^d}, \; \frac{m_r}{18^d}\}
\frac{1\!+\!\Delta}{x\!+\!\Delta}
\ge \frac{n_r + m_r}{2 \cdot (18)^d} \frac{1\!+\!\Delta}{x\!+\!\Delta}$
queries over the interval 
$[\tau - \tau/2^{r-1},  \tau]$.
It follows that our query scheme is 
$\Theta(\frac{x\!+\!\Delta }{ 1\!+\!\Delta})$-competitive, in terms of maximum query granularity, with any, even clairvoyant, query scheme that guarantees  uncertainty degree at most $x$ at the target time. 
\end{proof}

\section{
Limitations on Competitiveness: Stationary Entities
}\label{ex:stationaryPlyFreqX}

{\bf Fixed-time vs continuous-time optimization.}
Consider
a collection of $n/2$ well-separated pairs, where the $i$-th pair has separation $4i-1$. 
Uncertainty ply (and degree) can be kept at one at a deadline $n$ time units from the start, 
by a 
scheme that queries entities with granularity one in decreasing order of their separation.
On the other hand 
to maintain degree/ply one continuously, 
the $i$-th entity pair must be queried at least once every  $4i-1$ steps, 
so over any time interval of length $n$, $\Omega(n \ln n)$ queries are required.

{\bf Maintaining ply vs degree.}
The example involves two clusters $A$ and $B$ of $(x +\Delta +1)/2$ point entities separated by distance $4(1\!+\!\Delta)$.  To maintain uncertainty ply at most $x$ it suffices to query $\Delta +1$ entities in both clusters once every $2(1\!+\!\Delta)$ steps, which can be achieved with query frequency one.
Since the uncertainty regions associated with queried points in cluster $A$ never intersect the  uncertainty regions associated with queried points in cluster $B$, the largest ply possible involves points in one cluster (say $A$) together with unqueried points in the other cluster ($B$), for a total of $x$. 

On the other hand, to maintain degree at most $x\!+\!\Delta$ no uncertainty region can be allowed to have radius $4(1\!+\!\Delta)$. Thus all $x\!+\!\Delta +1$ entities need to be queried with frequency at least $1/(4(1\!+\!\Delta))$, giving a total query demand of $x\!+\!\Delta +1$ over any time interval of length $4(1\!+\!\Delta)$. 

\section{
Limitations on Competitiveness: Moving Entities
}\label{ex:MobilePlyFreqX}

{\bf The fixed target time case.}
Imagine a configuration involving two collections $A$ and $B$ each with  $(x +1)/2 + \Delta$ point entities located in $\R^1$, on opposite sides of  a point $O$. At time $0$ all of the entities are at distance $x + 4 \Delta + 4$ from $O$, but have unbounded uncertainty regions. 
All entities begin by moving towards $O$ at unit speed, but at time $2 \Delta + 3$ a subset of $\Delta +1$ entities in each of $A$ and $B$
(the {\em special} entities)  change direction and move away from $O$ at unit speed, while the others carry on until the target time 
$x + 4 \Delta + 4$ when they simultaneously reach $O$ and stop.

To avoid uncertainty degree greater than $(x\!+\!\Delta)$ at a target time a clairvoyant algorithm needs only to (i) query all entities (in arbitrary order) up to time $x +2 \Delta + 1 $, and then (ii) query just the special entities (in arbitrary order)  in the next $2 \Delta + 2$ time prior to the target, using query granularity $1$, since doing so will leave the uncertainty regions of the points in $A$ disjoint from the special points in $B$, and vice versa.

On the other hand, to avoid ply $x$ at the target time any algorithm must query all special entities in both $A$ and $B$ in the last 
$2 \Delta + 3$ time before the target,
which in the absence of knowledge about which entities are special requires all entities in both $A$ and $B$ to be queried in the worst case, requiring query granularity at most $\frac{2 \Delta + 3}{x +1 +2\Delta}$.

Thus every algorithm that achieves congestion 
ply at most $x$ 
at the target time needs to use at least a factor $\frac{x+2\Delta +1}{ 2\Delta + 3}$
smaller query granularity on some instances than the best query scheme for achieving congestion 
degree at most $(x\!+\!\Delta) $
at the target time on those same instances.

{\bf The continuous case.}
A very similar construction allows us to conclude essentially the same result in the continuous case: every algorithm that maintains congestion 
ply at most $x$
at all times needs to use at least a factor $\frac{x+2\Delta +1}{2\Delta + 3}$ smaller query granularity on some instances over some time interval than the best query scheme for achieving congestion 
degree at most $x\!+\!\Delta $
at all times on those same instances.

It remains the case that to avoid ply $x$ just at the target time, any algorithm must query all special entities in both $A$ and $B$ before the target time, requiring query granularity at most $\frac{2 \Delta + 3}{x +1 +2\Delta}$.
Furthermore, 
even if there are no queries prior to what we called the target time, the fixed target time construction cannot lead to uncertainty degree more than $(x +1)/2 + \Delta$ prior to that time. 
So it remains to show that a clairvoyant algorithm can continue to maintain degree at most $x\!+\!\Delta$ thereafter using query granularity 
$1$ (i.e. the query scheme has not simply deferred a situation in which 
smaller query granularity is needed to maintain degree $x\!+\!\Delta)$. 
To see this, note that every time one of the special entities is queried it is seen to have distance exactly $2 \Delta + 3$ from the uncertainty regions associated with non-special entities on the other side of $O$, up to the time the non-special entities reach $O$, and at least $2 \Delta + 3$ thereafter, even if non-special entities are never queried. 
Thus it certainly suffices to query all $2\Delta + 2$ of the special entities at least once every $2\Delta + 2$ time steps to guarantee that their associated uncertainty regions remain disjoint from the uncertainty regions of all non-special entities on the other side of $O$.
This can %clearly
be done using query granularity $1$.

There might be some concern that (i) the construction assumes $n$, the total number of entities, is 
$x +1 + 2 \Delta$, and (ii) the competitive gap demonstrated by this example is only transitory, since after the target time a non-clairvoyant scheme could also maintain low congestion using granularity $1$. 
However, it is straightforward to modify the construction, by well-separated replication, to make $n$ arbitrarily large relative to $x$. 
Furthermore, by having the non-special entities retreat from $O$ at unit speed after the target time, one can essentially recreate the initial configuration and thereafter reproduce the high competitive gap periodically.

\section{Proof of Lemma~\ref{lem:mobilePlyLB}}
\label{sec:MobilePlyLBLemma}

Before detailing the proof of Lemma~\ref{lem:mobilePlyLB} we present another geometric lemma, demonstrating that Lemma~\ref{lem:ballcoverNEW} can be generalized to apply to $\hat{x}$-balls that have been scaled by some constant factor $\alpha \ge 1$, referred to as \emph{$\alpha$-inflated} $\hat{x}$-balls, provided these $\hat{x}$-balls are all comparable (to within a constant factor) in size.

\begin{lemma}\label{lem:inflballcover}
Let $\alpha,\gamma \ge 1$.
In any $\E$-configuration $Z$, any ball $B$ with radius $r_B$ intersects the $\alpha$-inflated $\hat{x}$-balls of at most 
$(2 \alpha \gamma +3)^d \hat{x}$ entities whose $\hat{x}$-balls have a radius in the range $[r_B^+, \gamma r_B^+]$, where $r_B^+ \ge r_B$. 
\end{lemma}

\begin{proof}
Let $B$ be a ball with centre $z_B$ and radius $r_B$, and let 
$E \subseteq \E$ be the set of entities $e_i$ whose $\alpha$-inflated $\hat{x}$-ball $\alpha B^Z_i(\hat{x})$ 
intersects $B$ and whose $\hat{x}$-radius $r^Z_i(\hat{x})$ 
lies in the range $[r_B^+, \gamma r_B^+]$, where $r_B^+ \ge r_B$.
Let $\Gamma^Z_i(\hat{x})$ denote
the set of at most $\hat{x}$ entities
(including $e_i$) in $\E$ that intersect (as balls) the interior of $B^Z_i(\hat{x})$.
We construct a subset $I \subseteq E$ 
incrementally
by (i) selecting the un-eliminated entity $e_i \in E$ with the largest $\hat{x}$-radius, (ii) eliminating all entities in $\Gamma^Z_i(\hat{x})$ (including $e_i$ itself) from $E$, and (iii) repeating, until $E$ is empty.
It follows that (i) since we eliminate entities in non-increasing order of $\hat{x}$-radius, for all $e_i \in I$,
the interior of $B^Z_i(\hat{x})$
intersects only one entity from $I$ (namely, $e_i$), and (ii) since 
$\Gamma^Z_i(\hat{x})$ contains at most $\hat{x}$ entities from $E$, $|I| \geq |E|/\hat{x}$.

For each $e_i \in I$, let $\hat{B}_i$ denote the ball centred at $z_i$ with radius $r_B^+/2$. 
Since the balls $\hat{B}_i$ are all disjoint and all lie entirely within a ball of radius $r_B +  \alpha \gamma  r_B^+ + r_B^+/2 \le (2 \alpha \gamma  + 3)r_B^+/2 $, centred at the centre of $B$, it follows that 
$|I| \le (2 \alpha \gamma +3)^d$.
\end{proof}

\MobilePlyLBLemma*

\begin{proof}
At a high level the proof parallels that of Lemma~\ref{newnewlem:stationaryPlyLB}.
For each  $e_i \in \E$ partition $T$ into sub-intervals, starting at times $t_1 < t_2 < \cdots$, whose length depends on (i) the $(x\!+\!\Delta)$-radius of $e_i$ at the start of the sub-interval,
and (ii) the uniformity of the $(x\!+\!\Delta)$-radius of $e_i$ throughout the interval. 
Since the separation between any two entities changes by at most two for each unit of time, the $(x\!+\!\Delta)$-radius of $e_i$ changes over one of its associated sub-intervals by at most two times the length of that sub-interval.
Sub-intervals that end at the end of $T$ are referred to as {\em terminal} sub-intervals; 
other sub-intervals have one of two types.
{\em Good} sub-intervals $[t_j, t_{j+1})$ have length 
$t_{j+1}-t_j = r_i(x\!+\!\Delta, t_j)$ and maintain an $x$-radius greater than  
$r_i(x\!+\!\Delta, t_j)/3$ throughout. 
{\em Bad} sub-intervals $[t_j, t_{j+1})$ have length 
$t_{j+1}-t_j \leq r_i(x\!+\!\Delta, t_j)$ 
and end (after at least 
$r_i(x\!+\!\Delta, t_j)/3$ time units) the first time the 
$x$-radius is less than or equal to 
$r_i(x\!+\!\Delta, t_j)/3$. 
We refer to the interval between 
$r_i(x\!+\!\Delta, t_j)$ and 
$r_i(x\!+\!\Delta, t_{j+1})$ as the \emph{span}
of sub-interval $[t_j, t_{j+1})$.
(See Fig.~\ref{fig:goodbadinterval}.)

Thus, good sub-intervals satisfy
$r_i(x\!+\!\Delta, t_j)/3 < r_i(x\!+\!\Delta, t) \le 3 r_i(x\!+\!\Delta, t_j)$, for $t_j \le t < t_{j+1}$,
and bad sub-intervals satisfy
$r_i(x\!+\!\Delta, t_j)/3 < r_i(x\!+\!\Delta, t) < 3 r_i(x\!+\!\Delta, t_j)$, for $t_j \le t < t_{j+1}$, and 
$r_i(x\!+\!\Delta, t_{j+1}) = r_i(x\!+\!\Delta, t_j)/3$.
Hence, 
for good sub-intervals 
$1/3 \leq \int_{t_j}^{t_{j+1}} \frac{\dif t}{r_i(x\!+\!\Delta, t)} < 3$, and for bad sub-intervals 
$1/9 < \int_{t_j}^{t_{j+1}} \frac{\dif t}{r_i(x\!+\!\Delta, t)} < 3$.
Note that in any consecutive sequence of non-terminal sub-intervals, a good sub-interval of length
$r_i(x\!+\!\Delta, t)$ is followed by a consecutive sequence of bad sub-intervals of total length at most 
$\sum_{k \ge 0} [3 r_i(x\!+\!\Delta, t)/3^k] 
\le 9 r_i(x\!+\!\Delta, t)/ 2$.

\begin{figure}
\centering
\includegraphics{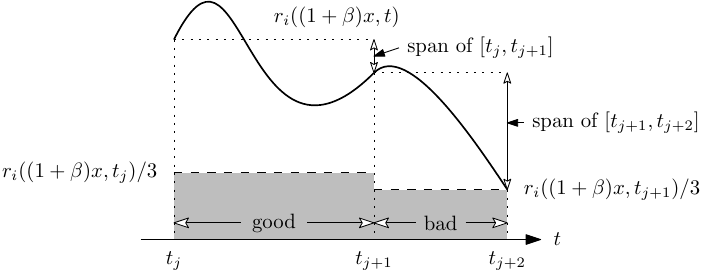}
\caption{An example of a good and bad sub-interval of the function $r_i(x\!+\!\Delta,t)$.}
\label{fig:goodbadinterval}
\end{figure}

\begin{sloppypar}
If we ignore all terminal sub-intervals and all entities with fewer than $100$ non-terminal sub-intervals, the sub-intervals of the remaining entities contribute in total 
more than $2/5 \sum_{e_i \in \E} \int_T  \frac{\dif t}{r_i(x\!+\!\Delta, t)}$
to $\phi_{\E,x}(T)$. 
It follows that there must be a total of at least
$2/15 \sum_{e_i \in \E} \int_T  \frac{\dif t}{r_i(x\!+\!\Delta, t)}$
such intervals.
\end{sloppypar}

Let $e_i$ be one of the remaining entities, with $w_i \ge 100$ non-terminal sub-intervals.
A bad sub-interval $[t_j, t_{j+1})$ is {\em matched} if there is a subsequent good sub-interval whose 
span
includes  $r_i(x\!+\!\Delta, t_{j+1})$. 
Since the 
span length  of
a given good sub-interval $[t_j, t_{j+1})$ is at most 
$2r_i(x\!+\!\Delta, t_j)$
and the 
span length of
a given bad sub-interval $[t_j, t_{j+1})$ is at least 
$2r_i(x\!+\!\Delta, t_{j+1})$
it follows that a good sub-interval can serve as the earliest match to at most one bad sub-interval. 

Suppose that more than $4w_i/5$ of the non-terminal sub-intervals of $e_i$ over $T$ are unmatched bad intervals. 
Then $e_i$'s sub-interval sequence 
must end with a sub-interval of length at most $|T|/ 3^{4w_i/5}$. 
In this case consider the continuation of $e_i$'s sub-interval sequence into the remainder of $T^+$.
If this includes an interval of length $|T|/ 3^{3w_i/5}$ then this extension must include at least $w_i/5$ good intervals. 
But if all sub-intervals in the continuation are smaller than $|T|/ 3^{3w_i/5} $ then 
the unmatched bad intervals among these span less than $2 |T|/ 3^{3w_i/5}$,
so there must be at least $\frac{|T^+|-|T| - 2 |T|/ 3^{3w_i/5}}{|T|/ 3^{3w_i/5}} > w_i -2 > w_i/2$ 
good and matched bad intervals over the extension
(since $|T|/ 3^{55} > w_i \cdot |T|/  3^{3w_i/5}$, when $w_i \ge 100$). 

Alternatively, at least $w_i/5$ of the non-terminal sub-intervals of $e_i$ over $T$ are good or matched bad intervals.
In either case, it must be that at least $w_i/5$ of the sub-intervals of $e_i$ over $T^+$ are either good or matched bad intervals. 
Hence at least $w_i/10$ of these sub-intervals  must be good.  
Since this is true of all of the entities with at least $100$ non-terminal sub-intervals, and such entities have at least $2/15 \sum_{e_i \in \E} \int_T  \frac{\dif t}{r_i(x\!+\!\Delta, t)}$
non-terminal sub-intervals in total, it follows that there must be at least $2/150 \sum_{e_i \in \E} \int_T  \frac{\dif t}{r_i(x\!+\!\Delta, t)}
$ good sub-intervals in total.

It remains then to follow the argument from the stationary case (Lemma~\ref{newnewlem:stationaryPlyLB}), focusing on good sub-intervals alone, since (a) these are long enough to support the argument that maintaining ply at most $x$ requires that all but $x-1$ initial $(x\!+\!\Delta)$-neighbours must be queried within the sub-interval in order to avoid ply greater than $x$,
and (b) the $(x\!+\!\Delta)$-radius is sufficiently uniform over these sub-intervals to support the assertion that any  query can contribute, 
\emph{on average}, 
to the demand of at most $\Theta(x\!+\!\Delta)$ sub-intervals.

If fewer than $\Delta + 1$ of the at least $x\!+\!\Delta +1$ entities (including $e_i$) that intersect the $(x\!+\!\Delta)$-ball of $e_i$ at the start of a good sub-interval
are not queried in that sub-interval
then at the end of that sub-interval at least $x+1$ of these 
entities must have uncertainty regions of radius at least $r_i(x\!+\!\Delta)$, and hence intersect at the point occupied by the centre of $e_i$ at the start of the sub-interval, forming ply at least $x+1$ at that point. 
Thus, to avoid ply greater than $x$ throughout $T$, every entity $e_i$ must 
have associated with it at least $\Delta + 1$ queries in each of its good sub-intervals. 
Summing over all good sub-intervals of all entities this gives a total of at least $\frac{\Delta +1}{75} \cdot  \sum_{e_i \in \E} \int_T  \frac{\dif t}{r_i(x\!+\!\Delta, t)}
\ge \frac{\Delta +1}{75} \cdot \lambda_{d,x\!+\!\Delta}\sum_{e_i \in \E} \int_T  \frac{\dif t}{\sigma_i(x\!+\!\Delta, t)}
= \frac{\Delta +1}{75} \cdot \lambda_{d,x\!+\!\Delta} \phi_{\E,x\!+\!\Delta}(T) $ associated queries. 

Of course, as in Lemma~\ref{newnewlem:stationaryPlyLB}, a query to some entity $e_j$ may serve to help satisfy sub-interval query demands for many different entities $e_i$.
In fact, unlike the situation with stationary entities, it is possible for one query to help satisfy good sub-interval query demands associated with an \emph{arbitrarily large} number of different entities.
The query sharing argument from Lemma~\ref{newnewlem:stationaryPlyLB} is
complicated in the dynamic setting by the fact that, at the time a query to entity $e_q$ helps satisfy a good sub-interval $G$ of entity $e_i$
it no longer necessarily intersects the $(x\!+\!\Delta)$-ball of $e_i$. 
Nevertheless, since the $(x\!+\!\Delta)$-radius of $e_i$ cannot shrink by more than a factor of three over $G$, and the separation of $e_i$ and $e_q$ cannot increase by more than two times the $(x\!+\!\Delta)$-radius of $e_i$ at the start of $G$, entity $e_q$ must intersect the $9$-inflated $(x\!+\!\Delta)$-ball of all of the entities $e_i$ that a query to $e_q$ helps satisfy.
This will allow us to conclude (using Lemma~\ref{lem:inflballcover}
with $\hat{x}= x\!+\!\Delta$) that there can only be $\Theta(x\!+\!\Delta)$ such satisfied sub-intervals that are comparable in length. 
Furthermore, we show that the query demand of any sub-interval that can be partially satisfied with a query that simultaneously helps satisfy the demand of more than some sufficiently large multiple of $x\!+\!\Delta$ smaller intervals, can be ``charged'' to smaller sub-intervals in such a way that no sub-interval accumulates a charge exceeding its initial demand.
In this way, it follows that the total number of distinct queries needed to satisfy the total query demand
$\frac{\Delta +1}{75} \cdot \lambda_{d,x\!+\!\Delta} \phi_{\E,x\!+\!\Delta}(T)$
associated with all good sub-intervals is at least a fraction $\frac{1}{\Theta(x\!+\!\Delta)}$ of that total demand.

Consider a good sub-interval $G$ of entity $e_i$ starting at time $t_G$.  
Entity $e_j$ is said to be \emph{close} to entity $e_i$ at time $t_G$ if their separation is at most $10 r_i(x\!+\!\Delta, t_G)/3$.
Note that if $e_j$ is close to $e_i$ at time $t_G$,
then over the entire interval 
$G^+ = [t_G, t_G + 10 r_i(x\!+\!\Delta, t_G)/9]$
(i.e over $G$ extended by $|G|/9$), 
it remains within distance $50 r_i(x\!+\!\Delta, t_G)/9$
 of $e_i$,
and hence within distance 
$50 r_i(x\!+\!\Delta, t_G)/9 + 1
\le 59 r_i(x\!+\!\Delta, t_G)/9$ of $z_i$
Furthermore, since $r_i(x\!+\!\Delta, t)$ shrinks by at most a factor of $9$ over $G^+$, it follows 
that $e_j$ continues to intersect the 
$59$-inflated 
$(x\!+\!\Delta)$-ball of $e_i$ throughout 
$G^+$.
In addition, if some good interval of $e_j$ of length at most $r_i(x\!+\!\Delta, t_G)/3$ intersects $G$ then good intervals of $e_j$ of length at most $r_i(x\!+\!\Delta, t_G)/3$ 
that intersect
the interval $G^+$, have total length at least 
$2 r_i(x\!+\!\Delta, t_G)/99$.
(This follows immediately from our earlier observation that the length of a good sub-interval is at least a fraction $2/9$ of the length of a subsequent sequence of bad sub-intervals, so at least a fraction $2/11$ of the total length of any consecutive sequence of sub-intervals starting with a good sub-interval is covered by good sub-intervals.)

We say that a good sub-interval $G$ of entity is \emph{heavy} if there are at least 
$99 (354)^d (x\!+\!\Delta)$
entities that are close to $e_i$ at time $t_G$, each of which has  at least one good sub-interval of length at most $r_i(x\!+\!\Delta, t_G)/3$ that intersects $G$. All other good sub-intervals are \emph{light}.
We re-allocate the charge associated with any heavy sub-interval $G$ (initially its $\Delta +1$ demand) to the good sub-intervals of the at least 
$99 (354)^d (x\!+\!\Delta)$ 
entities of length at most $r_i(x\!+\!\Delta, t_G)/3$ that intersect the interval $G^+$, in proportion to the length of each such sub-interval. 
Since the sub-intervals receiving a charge have total length at least 
$2 (354)^d (x\!+\!\Delta) r_i(x\!+\!\Delta, t_G)$, 
a sub-interval of length $\alpha r_i(x\!+\!\Delta, t_G)$ receives a fraction of at most 
$\frac{\alpha}{2 (354)^d (x\!+\!\Delta)}$
of the charge associated with $G$.

If sub-interval $H$ of entity $e_j$ receives a charge allocation from sub-interval $G$ of entity $e_i$ then at either the start or end (or both) of the sub-interval $H$, $e_j$ must intersect the $59$-inflated $(x\!+\!\Delta)$-ball of $e_i$. 
Thus, by Lemma~\ref{lem:inflballcover}
(choosing $\hat{x}= x\!+\!\Delta$, $\alpha=59$, and $\gamma=3$),  sub-interval $H$ receives a charge from at most $2 (354)^d (x\!+\!\Delta)$ different sub-intervals whose length is in the interval $[3^s |H|, 3^{s+1} |H|)$.
So the total charge (in the first phase of charge re-allocation) re-allocated to sub-interval $H$ is at most
$\sum_{s \ge 1} 2 (354)^d (x\!+\!\Delta) \frac{1/3^s}{2 (354)^d (x\!+\!\Delta)} (x\!+\!\Delta)
\le  (x\!+\!\Delta) \sum_{s \ge 1} \frac{1}{3^s} 
\le (x\!+\!\Delta) /2 $.
So if this charge re-allocation from heavy sub-intervals is repeated, then after the $a$-th re-allocation (i) the charge associated with each light sub-interval is at most $\frac{2^{a+1}-1}{2^a}$ times its initial charge, and
(ii) the change associated with each heavy sub-interval
is at most $\frac{1}{2^a}$ times its initial charge.
While there remain positively charged heavy sub-intervals each phase of charge reallocation must reduce to zero the charge of at least one heavy sub-interval, so after the charge associated with all heavy sub-intervals has been reduced to zero, 
(i) all remaining (light) sub-intervals have charge less than $2(\Delta +1)$, and (ii) their total charge equals the total initial demand associated with all good sub-intervals. 
It follows that the total query demand to satisfy all light sub-intervals is at least half that required to satisfy all good sub-intervals, i.e. at least
$\frac{\Delta +1}{150} \cdot \lambda_{d,x\!+\!\Delta} \phi_{\E,x\!+\!\Delta}(T)$.

Any collection of light sub-intervals whose query demands are partially satisfied by the same query has size  at most $[(99 (354)^d + (489)^d] (x\!+\!\Delta)$ . 
To see this, suppose that 
$G$, a sub-interval of $e_i$, is the longest sub-interval in some collection of more than 
$[(99 (354)^d + (489)^d] (x\!+\!\Delta)$ 
light sub-intervals whose query demands are partially satisfied by the same query to entity $e_q$,
and $H$, a sub-interval of $e_j$, is another sub-interval in this collection.
Note that at the time of the query to $e_q$, 
(i) $e_q$ intersects the $9$-inflated $x$-ball of both $e_i$ and $e_j$, and 
(ii) the $x$-radii of $e_i$ and $e_j$ differ by at most a factor of $27$.
Thus, by Lemma~\ref{lem:inflballcover}
(choosing $\hat{x}= x\!+\!\Delta$, $\alpha=9$ and $\gamma=27$), at most  $(489)^d (x\!+\!\Delta)$
sub-intervals $H$ of length at least $|G|/3$ can be partially satisfied by the same query. 
So more than $99 (354)^d (x\!+\!\Delta)$ sub-intervals of length less than $|G|/3$ are partially satisfied by the same query. 
But each such sub-interval $H$ must intersect $G$. 
Furthermore, $H$ must be associated with an entity that is close to $e_i$ at the start of $G$. 
(To see this, observe that if a query to some entity $e_q$ helps satisfy both $G$ and $H$, then $e_q$ has distance at most $r_i(x\!+\!\Delta, t_G)$ from $z_i$ at time $t_G$ (the start of $G$) 
and distance at most $r_j(x\!+\!\Delta, t_H)$ from $z_j$ at time $t_H$ (the start of $H$). Hence 
$e_q$ has distance at most 
$r_j(x\!+\!\Delta, t_H) + 2|t_G - t_H| 
\le r_i(x\!+\!\Delta, t_G)/3 + 2 r_i(x\!+\!\Delta, t_G)$ from $z_j$ at time $t_G$, which implies that $e_i$ and $e_j$ are separated by at most 
$10 r_i(x\!+\!\Delta, t_G)/3$ at time $t_G$.)
Thus sub-interval $G$ must be heavy, contradicting our assumption.

It follows then that the number of distinct queries needed to fully satisfy the query demands of all light sub-intervals is at least
$\frac{1}{[(99 (354)^d + (489)^d] (x\!+\!\Delta)}
\frac{\Delta +1}{75} \cdot \lambda_{d,x\!+\!\Delta} \phi_{\E,x\!+\!\Delta}(T)$.
\end{proof}

\section{Proof of Corollary~\ref{cor:mobilePlyLB}}
\label{sec:ProofMobilePlyLBCor}

\MobilePlyLBCor*

\begin{proof}
Let $\overleftarrow{T}$ be the interval $T$ shifted by $-|T|/3^{55}$,
$T_0 = \overleftarrow{T} \cap T$, and $T_1 = \overrightarrow{T} \cap T$.
Since  $T = T_0 \cup T_1$, 
either 
 $\sum_{e_i \in \E} \int_{T_0}  \frac{\dif t}{r_i(x\!+\!\Delta, t)} 
 \ge  1/2 \sum_{e_i \in \E} \int_{T_0}  \frac{\dif t}{r_i(x\!+\!\Delta, t)} 
 \ge 505 |\E|$
or  $\sum_{e_i \in \E} \int_{T_1}  \frac{\dif t}{r_i(x\!+\!\Delta, t)} 
\ge  1/2 \sum_{e_i \in \E} \int_{T_0}  \frac{\dif t}{r_i(x\!+\!\Delta, t)} 
\ge 505 |\E|$.
But $T_0^+ \subset T$ and $T_1^+ \subset \overrightarrow{T}$, so 
any query scheme with maximum uncertainty ply at most $x$ over $T^+$ has
maximum uncertainty ply at most $x$ over both $T_0^+$ and $T_1^+$.
Hence, by the lemma, any 
query scheme with maximum uncertainty ply at most $x$ over $T^+$ must make 
a total of at least 
$\frac{(\Delta +1) \lambda_{d,x\!+\!\Delta}}{75 [(99 (354)^d + (489)^d] (x\!+\!\Delta)}
\phi_{\E,x\!+\!\Delta}(T)$
queries over either $T$ or $\overrightarrow{T}$.
\end{proof}

\section{Proof of Lemma~\ref{lem:NEWPerception}}
\label{sec:NEWPerceptionLemma}

\NEWNEWLemmaPerceptionPercQuery*

\begin{proof}
We begin by arguing by induction that for all entities $e_i$,
$\sd[t']{x}{i}/2 \leq \psd[t']{x}{i} \leq 3\sd[t']{x}{i}/2$
at all times $t' \ge t_0$ that entity $e_i$ is queried. 
If this is not true, suppose that $t^*$ is the time of the first query (without loss of generality, to $e_j$) after $t_0$ at which 
$\sd[t^*]{x}{j}/2 \leq \psd[t^*]{x}{j} \leq 3\sd[t^*]{x}{j}/2$ 
does not hold. 

Since 
$\sd[p_i(t^*)]{x}{i}/2 \leq \psd[p_i(t^*)]{x}{i} \leq 3\sd[p_i(t^*)]{x}{i}/2$, for all entities $e_i$, and
$\sd[p_i(t)]{x}{i} - 2(t - p_i(t)) \leq \sd[t]{x}{i} \leq \sd[p_i(t)]{x}{i} + 2(t - p_i(t))$,
for $t > p_i(t)$, 
it follows from assumption (ii) that
\[
t^* - p_i(t^*) \leq
\frac{ \lambda_{d,x} \psd[p_i(t^*)]{x}{i}}{12} \leq 
\frac{\lambda_{d,x} \sd[p_i(t^*)]{x}{i}}{8} \leq 
\frac{\lambda_{d,x}(\sd[t^*]{x}{i} + 2(t^*-p_i(t^*)))}{8}
\]
which implies
$\frac{4 - \lambda_{d,x}}{4} (t^* - p_i(t^*)) \leq \frac{\lambda_{d,x} \sd[t^*]{x}{i}}{8}$ and hence 
$t^* - p_i(t^*) 
\leq \frac{\lambda_{d,x}}{8- 2\lambda_{d,x}} \sd[t^*]{x}{i}
\leq \frac{\lambda_{d,x}}{6} \sd[t^*]{x}{i}$ .
Using  
Equation~(\ref{eqn:observationA}),
it follows that
\[
t^*-p_i(t^*) 
\leq 
\frac{\lambda_{d,x}(\|\zeta_i(t^*) - \zeta_j(t^*)\| + \sd[t^*]{x}{j})}{6} .
\]
So at time $t^*$ every entity $e_i$, where $i \neq j$, has been queried within the last 
$\lambda_{d,x} (\|\zeta_i(t^*) - \zeta_j(t^*)\| + \sd[t^*]{x}{i})/6$ time steps and has a
perceived position within this distance of its actual position at time $t^*$.
Since the perceived position of $e_j$ may also differ, by at most
$\lambda_{d,x} \sd[t^*]{x}{j}/6$, from its actual position at time $t^*$, the distance
between the perceived locations of $e_i$ and $e_j$ at time $t^*$ satisfies,

\begin{align*}
 (1 -\lambda_{d,x}/6 )&\|\zeta_i(t^*) - \zeta_j(t^*)\| - 2 \lambda_{d,x} \sd[t^*]{x}{j})/6\\
 &\le
\|\zeta_i(p_i(t^*)) - \zeta_j(p_j(t^*))\| \\
&\le 
 (1 +\lambda_{d,x}/6 )\|\zeta_i(t^*) - \zeta_j(t^*)\| + 2 \lambda_{d,x} \sd[t^*]{x}{j})/6
\end{align*}

Since at least $x$ entities $e_i$ (the ones, other than $e_j$,
in $\NN[t^*]{x}{j}$) satisfy 
$\|\zeta_i(t^*) - \zeta_j(t^*)\| \le \sd[t^*]{x}{j} + 2 $,
these same entities 
have perceived locations at time $t^*$ within
distance 
$(1 +\lambda_{d,x}/6 )(\sd[t^*]{x}{j} + 2 ) + 2 \lambda_{d,x} \sd[t^*]{x}{j})/6 
= (1 +3\lambda_{d,x}/6 ) \sd[t^*]{x}{j} + 
2(1 +\lambda_{d,x}/6 ) $
of $\zeta_j(p_j(t^*))$. 
Thus 

\begin{align*}
\psd[t^*]{x}{j} 
&\le (1 +3\lambda_{d,x}/6 ) \sd[t^*]{x}{j} + \lambda_{d,x}  /3 \\
&\le (1 +3\lambda_{d,x}/6 ) \sd[t^*]{x}{j} + (1 - \lambda_{d,x}) \sd[t^*]{x}{j} /3 \\
&\le (4/3 + \lambda_{d,x}/6) \sd[t^*]{x}{j} 
\le 3\sd[t^*]{x}{j} / 2.
\end{align*}

On the other hand, 
since all but at most $x-1$ entities $e_i$ satisfy 
$\|\zeta_i(t^*) - \zeta_j(t^*)\| \ge \sd[t^*]{x}{j} + 2 $,
these same entities 
have perceived location at time $t^*$ 
at least distance 
$(1 -\lambda_{d,x}/6 )(\sd[t^*]{x}{j} + 2 ) - 2 \lambda_{d,x} \sd[t^*]{x}{j})/6 
= (1 -3\lambda_{d,x}/6 ) \sd[t^*]{x}{j} + 
2(1 -\lambda_{d,x}/6 ) $
from $\zeta_j(p_j(t^*))$.
Thus 
\begin{align*}
\psd[t^*]{x}{j} 
&\ge (1 - 3\lambda_{d,x}/6 ) \sd[t^*]{x}{j} - \lambda_{d,x}  /3 \\
&\ge (1 - 3\lambda_{d,x}/6 ) \sd[t^*]{x}{j} - (1 - \lambda_{d,x})  \sd[t^*]{x}{j} /3 \\
&\ge (2/3 - \lambda_{d,x}/6) \sd[t^*]{x}{j} 
\ge \sd[t^*]{x}{j} /2.
\end{align*}
Taken together these contradict our assumption that
$\sd[t^*]{x}{j}/2 \leq \psd[t^*]{x}{j} \leq 3\sd[t^*]{x}{j}/2$ 
does not hold, and hence 
$\sd[t']{x}{i}/2 \leq \psd[t']{x}{i} \leq 3\sd[t']{x}{i}/2$
at all times $t' \ge t_0$ that entity $e_i$ is queried.

With this it follows by the same argument as above, replacing $t^*$ by an arbitrary $t' \ge p_i(t_0))$, that 
$\sd[t']{x}{i}/2 \leq \psd[t']{x}{i} \leq 3\sd[t']{x}{i}/2$.
\end{proof}

\section{Establishing Perception-Reality Preconditions}\label{app:init}

Prior to performing any queries, our perception of the $x$-separation between entities is far from reality.
So it remains to show that the relationship between perceived and real $x$-separation sufficient to invoke Lemma~\ref{lem:NEWPerception} can be established at some time $t_0$.
One way to achieve this
is to query following a modified version of the FTT$[x\!+\!\Delta]$ scheme of Section~\ref{sec:fixedtarget},
using higher query frequency and a more restrictive criterion than 
$(x\!+\!\Delta)$-degree-safety.

\LemInit*

\begin{proof}
The FTT$[x\!+\!\Delta]$ scheme described in Section~\ref{sec:fixedtarget} is modified as follows.
Instead of conducting each successive query round-robin within half of the time remaining to the target time, we use just a fraction $1/16$ of the time remaining. This means that with each successive round robin phase, the time remaining to the target decreases by a factor $b = 15/16$.

We say that an entity is {\em $(x\!+\!\Delta)$-degree-super-safe} 
at time 
$t_0 - b^s t_0$ (i.e. $b^s t_0$ units before the target time)
if its projected uncertainty region at that time is
separated by distance at least $a b^s t_0$ from 
the projected uncertainty regions of all but at most $x\!+\!\Delta-1$ other entities
(so that its $(x\!+\!\Delta)$-separation at the target time is guaranteed to be at least $a b^s t_0$).
This ensures that when $e_i$ is declared $(x\!+\!\Delta)$-degree-super-safe 
both the true and perceived $(x\!+\!\Delta)$-separation of $e_i$ at the target time are at least $a b^s t_0$ (no matter what further queries are performed).

Assuming that $e_i$ 
is $(x\!+\!\Delta)$-degree-super-safe at time $b^s t_0$ before the target but not at time $b^{s-1} t_0$ before the target,
both the true and perceived $(x\!+\!\Delta)$-separation of $e_i$ at the target time are at most 
$(a/b + 4/b^2) b^s t_0$.
Indeed, at time $b^{s-1} t_0$ before target, the separation of surviving projected uncertainty regions is at most $ab^{s-1} t_0$, and
while the $x$-separation at the target time could be more than this, it cannot be more than $4$ times the radius of any surviving uncertainty region (which is less than $b^{s-2} t_0$) plus $ab^{s-1} t_0$.
Since $(a/b + 4/b^2) b^s t_0 < (16 a/15 +5) b^s t_0$, it follows that 
(i) $\psd[t_0]{x\!+\!\Delta}{i} 
\leq  \frac{16 a/15 +5}{a} \; \sd[t_0]{x\!+\!\Delta}{i}$, and
(ii) $t_0 - p_i(t_0) \leq  \frac{16}{15 a} \psd[p_i(t_0)]{x\!+\!\Delta}{i}$.
Choosing $a$ large enough ($a \ge 64/(5 \lambda_{d,x})$ suffices) guarantees the desired properties.

Following the analysis of the FTT$[x\!+\!\Delta]$ scheme, if the $s$th query round uses $q_s$ queries then any query that guarantees uncertainty degree at most $x$ at time $t_0 + ab^s t_0/2 < (a+1)t_0$ must use at least
$\Theta(\frac{q_s(1\!+\!\Delta)}{x\!+\!\Delta})$ 
queries between time $t_0 - b^{s-1} t_0$, the start of the $s$th query round, and time $t_0 + ab^s t_0/2$;
otherwise some entity that was not $(x\!+\!\Delta)$-degree-super-safe at time $t_0 - b^{s-1} t_0$ would not be $x$-safe at time $t_0 + ab^s t_0/2$.
It follows 
that this initialization scheme uses a minimum query granularity that is competitive to within a factor of 
$\Theta(\frac{x\!+\!\Delta}{1\!+\!\Delta})$
with the minimum granularity used by any other scheme that guarantees the uncertainty degree 
is at most $x$, 
\end{proof}

\section{Proof of Theorem~\ref{thm:BAsuccess}}
\label{sec:basicbucket}

\ThmBAsuccess*

\begin{proof}
It is straightforward to confirm that the assignment of entities to buckets (specified in line 7) ensures that the time between successive queries to any entity $e_i$ satisfies precondition (ii) of Lemma~\ref{lem:NEWPerception}.
From the proof of Lemma~\ref{lem:NEWPerception} we see that this in turn implies that
$t - p_i(t) \le \frac{\lambda_{d,x}}{6} \sd[t]{x}{i}$,
for all entities $e_i$ and all $t \ge t_0$.
But $\frac{\lambda_{d,x}}{6} \sd[t]{x}{i} \le \frac{\lambda_{d, x}}{\lambda_{d, x} +2} \sigma_i(x, t)$, and so 
following the identical analysis used in the proof of Lemma~\ref{lem:FWRRupper}, we conclude that uncertainty degree at most $x$ is maintained indefinitely.

Since no entity has a query scheduled in overlapping buckets, it follows that if the basic BucketScheme$[x]$ makes $3|\E|$ queries over $T$ then, among these, it must make at least $|\E|$ queries to entities in buckets that are fully spanned by $T$. 
Since each entity in each fully spanned bucket contributes $\Theta(1)$ to $\phi_{\E,x}(T)$,
it follows that $\phi_{\E,x}(T) = \Omega(|\E|)$.
\end{proof}

\end{document}